\newcommand{\debug}[1]{{\color{black}#1}} 
\newcommand{\debugbis}[1]{{\color{black}#1}} 
\DeclareMathOperator{\Students}{\debug{\mathcal{S}}}
\DeclareMathOperator{\Colleges}{\debug{\mathcal{C}}}
\DeclareMathOperator{\Resources}{\debug{\mathcal{R}}}
\newcommandx{\CutoffsMatrix}{\debug{\pmb{\Theta}}}
\newcommandx{\Cutoffs}[2][1={}, 2={}]{\debug{\theta_{#1}^{\debugbis{#2}}}}
\DeclareMathOperator{\Contracts}{\debug{\mathcal{X}}}
\let\oldnl\nl
\newcommand{\nonl}{\renewcommand{\nl}{\let\nl\oldnl}}
\begin{document}
\title{Two-Sided Matching with Resource-Regional Caps}


\author{\textbf{Felipe Garrido-Lucero}\inst{1,*} \and
\textbf{Denis Sokolov}\inst{2,*} \and \\
\textbf{Patrick Loiseau}\inst{2} \and \textbf{Simon Mauras}\inst{2}}
\institute{IRIT, Université Toulouse Capitole, Toulouse, France \and
Inria, Fairplay join team, Palaiseau, France
}
\maketitle              

\hspace{0.4cm}\small{* Equal Contribution} 

\hspace{0.4cm}\small{Preprint. Under review}

\begin{abstract}
We study two-sided many-to-one matching problems under a novel type of distributional constraints, resource-regional caps. In the context of college admissions, under resource-regional caps, an admitted student may be provided with a unit of some resource through a college, which belongs to a region possessing some amount of this resource. A student may be admitted to a college with at most one unit of any resource, i.e., all resources are close substitutes, e.g., dorms on the campus, dorms outside the campus, subsidies for renting a room, etc. The core feature of our model is that students are allowed to be admitted without any resource, which breaks heredity property of previously studied models with regions. 

It is well known that a stable matching may not exist under markets with regional constraints. Thus, we focus on three weakened versions of stability that restore existence under resource-regional caps: envy-freeness plus resource-efficiency, non-wastefulness, and novel direct-envy stability. For each version of stability we design corresponding matching mechanism(s). Finally, we compare stability performances of constructed mechanisms on an exhaustive collection of synthetic markets, and conclude that the most sophisticated direct-envy stable mechanism is the go-to mechanism for maximal stability of the resulting matching under resource-regional caps.

\keywords{Two-sided matching \and Aggregate constraints \and Stability \and College admissions \and Housing quotas}
\end{abstract}

\section{Introduction}

This paper develops a model of college admissions with resource-constrained regional caps (RRC), capturing the interplay between student assignment to programs and shared dormitory limitations.\footnote{See \Cref{app:motivation} for multiple real-life motivating examples.} The proposed framework generalizes classical matching theory to incorporate a joint allocation problem with regional or cross-program dormitory constraints, yielding both theoretical insights and policy-relevant applications.


The distinctive feature of our model is the possibility for a student to be admitted without being assigned any resource. This captures scenarios in which a student is content to attend a university without a dormitory placement -- e.g., if they live within commuting distance. For consistency, in such cases we say that the student receives an \textit{empty resource}, which is accessible by all colleges and has sufficient capacity to be given to any number of students. Such feature breaks the \textit{heredity} property \citep{kamada17,goto17}, broadly considered in the literature, which requires the feasibility of a matching to be monotone in the number of students matched.\footnote{See \Cref{sec:aziz} for a formal discussion of the heredity property.}

The foundational model for our work is matching under regional caps (RC) \citep{kamada10,kamada12,kamada15}. RRC constraints become RC if, first, all non-empty resources' regions are disjoint, and, second, any student should receive a unit of some non-empty resource upon admission. Therefore, all hardness results under RC transfers to RRC, e.g., non-existence of a stable matching \citep{kamada17}, and NP-completeness of checking the existence of a stable matching \citep{aziz24}. Note that, under no non-empty resources, RRC becomes classical \citet{gale62}.

Our analysis relies on another closely related model by \citet{aziz24}, who propose a summer internship program (SIP) matching framework with aggregate constraints over many possibly overlapping regions for many types of divisible resource (money coming from different supervisors), and impossibility for a student to be admitted without funding. While a reformulation that treats each college-resource pair as a distinct object restores heredity for RRC and enables analogies with SIP notions such as fairness and stability, this transformation loses core structural features of RRC model. In particular, RRC admits the presence of an empty resource and allows for \textit{indirect envy}, which are conceptually and analytically absent in SIP. As a result, while SIP tools such as cutoff-minimization remain applicable, a full stability analysis under RRC must account for these more general forms of feasibility and envy.\footnote{See \Cref{sec:aziz} for a formal comparison of RRC with \citet{aziz24}.}


Under RRC, all matchings can lack stability, so we study three milder notions that restore existence: (i) \textit{non-wastefulness}, consisting of \textit{seat-efficiency} and \textit{resource-efficiency}, (ii) \textit{envy-freeness}, consisting of elimination of \textit{direct} and \textit{indirect envy}, combined with resource-efficiency, and (iii) \textit{direct-envy stability}, which bans \textit{direct} envy over seat-resource bundles and any waste that would create it. Informally, we say that a student directly envies another student, if the latter has everything that the former requires: either a seat at a desired college together with a desired non-empty resource, or just a desired seat (which implies that the former student wants to be admitted with the empty resource). Other types of envy we call \textit{indirect}.

We design six mechanisms: (i) Random and Controlled Serial Dictatorships (RSD and CSD) which are non-wasteful, where CSD cleverly chooses the next student to be matched to minimize added envy; (ii) Increasing Uniform Cutoffs (IUC), which yields envy-free and resource-efficient outcomes by keeping all cutoffs equal for each college; (iii) Increasing Random, Increasing Minimal, and Increasing Deep Cutoffs (IRC, IMC, and IDC), which are direct-envy stable. In all these mechanisms, each college sets one cutoff per each resource. IMC keeps cutoffs as equal as possible, curbing envy further; IRC is fully random; whereas IDC is the adaptation of \cite{aziz24}'s mechanism to our setting. Direct-envy stability implies the \citet{kamada15} \textit{weak stability}. However, we establish impossibility results as no mechanism can guarantee direct-envy stability together with either seat-efficiency, resource-efficiency, or envy-freeness.

Simulations across varying numbers of resource types, alignment on preferences, and market structures show: without preference alignment, IMC is the most stable mechanism on average; when alignment exists, CSD remains both stable and strategy-proof. Given that real markets rarely align perfectly, IMC emerges as the most dependable choice under RRC.

The article is structured as follows. \Cref{sec:model} introduces two-sided many-to-one matching problem under RRC. \Cref{sec:matching} presents three versions of stability: envy-freeness plus resource-efficiency, non-wastefulness, and direct-envy stability; and shows three impossibility results. \Cref{sec:aziz} formally discusses differences and similarities of our model with \citet{aziz24}. \Cref{sec:cutoffs_and_cutoff_envy_free} introduces cutoffs. \Cref{section:mechanisms} constructs the six matching mechanisms and proves their properties. \Cref{sec:empirics} describes simulation results on proposed mechanisms. \Cref{sec:conclusion} concludes. Missing proofs are included in \Cref{app:proofs}, while \Cref{app:implementation_details,app:numerical_results} contain additional simulation results. \Cref{app:motivation} presents multiple real-life motivating examples and \Cref{app:literature} -- a detailed literature review. 

\section{Model}\label{sec:model}

We consider a finite set of \textbf{students} $\Students$, a finite set of \textbf{colleges} $\Colleges$, and a finite set of types of divisible \textbf{resources} $\Resources$. We consider, in addition, an \textit{empty resource} 
$r_0$ and denote $\Resources_0 := \Resources \cup \{r_0\}$. A \textbf{contract} is a triple of a student, a college, and a resource (empty or not) from the set of all contracts, denoted by $\Contracts := \Students \times \Colleges \times \Resources_0$.

Each college $c$ has a positive number of vacant seats $q_c>0$ to be distributed among admitted students, i.e., the \textbf{quota} of this college. We denote $\mathbf{q}_{\Colleges} := (q_c)_{c\in\Colleges}$ to be the quota profile of colleges.

Each resource has its own \textbf{regional cap}. Formally, we associate each type of resource $r\in\Resources_0$ to a non-empty subset of colleges $C_r \subseteq \Colleges$, representing a region where this resource can be distributed. In particular, for any resource $r$, the corresponding region $C_r$ has a maximum amount $q_r > 0$ of units of resource $r$ that can be allocated to the students admitted to colleges within $C_r$. Remark that regions may overlap, i.e., a college may have access to resources of different types. For an empty resource, we set $C_{r_0}=\Colleges$ and $q_{r_0}=|{\Students}|$, i.e., there are enough units for every college and every student. We denote $\mathbf{q}_{\Resources} := (q_r)_{r \in \Resources}$ to be the quota profile of non-empty resources.

Each college $c \in \Colleges$ has a strict priority ranking $\succ_c$ over the set of all students $\Students$. Whenever $s\,\succ_c s'$ for some $s,s'\in \Students$, college $c$ strictly prefers student $s$ to student $s'$.\footnote{Sometimes colleges may also have resource-specific priorities over students, e.g., room-eligibility rankings in the French college admissions platform, knows as Parcoursup.} We denote $\succ_{\Colleges} := \{\succ_c\}_{c\in\Colleges}$ to be the priority profile of colleges. 

Each student $s \in \Students$ has strict preferences over $\{\Colleges\times\Resources_0\} \cup \{\emptyset\}$, where, (i) $(c,r)\,\succ_s \emptyset$ (respectively, $\emptyset\succ_s (c,r)$) means that it is \textbf{acceptable} (respectively, unacceptable) for the student $s$ to be admitted to the college $c$ with one unit of the resource $r$ and (ii) 
$(c,r)\,\succ_s (c',r')$ means that student $s$ strictly prefers to be admitted to the college $c$ with a unit of the resource $r$ to being admitted to the college $c'$ with a unit of the resource $r'$. We denote a preference profile of students by $\succ_{\Students}:= \{\succ_s\}_{s\in\Students}$.

Note that contracts including the empty resource $r_0$ are never part of any (aggregate) resource constraints, thus, whenever a student can be admitted to a college with a non-empty resource, for sure she can be admitted at the same place with the empty one.\footnote{We formally state this property in \Cref{sec:matching} after introducing the notion of feasible matching.} Therefore, without loss of generality, we assume that for any student $s\in\Students$, any college $c\in\Colleges$, and any non-empty resource $r\in\Resources$, whenever $(c,r)$ is acceptable for $s$, it holds $(c,r)\succ_s (c,r_0)$. In other words, whenever a resource is acceptable for a student at a given college, the student prefers to get the resource to getting the empty one. Dropping this assumption does not affect the validity of our results.


Finally, we define a \textbf{matching market with resource-regional caps} (RRC) instance as any tuple $\langle \Students,\Colleges,\Resources_0,$ $\{C_r\}_{r\in \Resources}, \mathbf{q}_{\Colleges}, \mathbf{q}_{\Resources}, \succ_{\Colleges}, \succ_{\Students} \rangle.$

\section{Matching and Stability}\label{sec:matching}

A \textbf{matching} $\mu$ is any subset of the set of contracts $\Contracts$, such that \textbf{each student appears in at most one contract}. Given a matching $\mu$, we denote by $\mu_s$, $\mu_c$, and $\mu_r$, respectively, all triples within $\mu$ containing student $s$, college $c$, and resource $r$. Whenever a student $s$ is unmatched in $\mu$, we denote $\mu_s = \emptyset$. For a contract $\mu_s = (s,c,r) \in \mu$, we may abuse of notation and simply write $\mu_s = (c,r)$. In particular, we write $(c',r') \succ_s \mu_s$ whenever student $s$ prefers $(c',r')$ to $(c,r)$.

\begin{definition}
A matching $\mu$ is \textbf{feasible} if (i) each college $c$ has at most $q_c$ contracts and (ii) for any non-empty resource $r$, $\mu$ contains at most $q_r$ contracts containing it, all of them with colleges from $C_r$.
\end{definition}

The minimal property that we require for any matching is \textit{individual rationality}. 

\begin{definition}
A matching is \textbf{individually rational} if it does not contain unacceptable contracts. 
\end{definition}




The rest of this section is devoted to identify several notions of stability depending on the type of blocking contracts considered. \Cref{sec:stability} introduces \textit{stability} through \textit{envy-freeness} and \textit{non-wastefulness}. \Cref{sec:direct_envy_stability} introduces \textit{weak stability} and \textit{direct-envy stability}. Finally, \Cref{sec:resource_seat_eff} subdivides waste on \textit{resource-efficiency} and \textit{seat-efficiency}, and gives impossibility results.


\subsection{Stable Matching}\label{sec:stability}

Given a feasible and individually rational matching, we consider two classical sources of \textbf{instability}: (justified) envy and waste (wasted college seats and/or resources).

\begin{definition}\label{def:envy_freeness}
Let $\mu$ be an individually rational feasible matching. A contract $(s,c,r)\in\Contracts\backslash\mu$ is said to \textbf{envy-block} $\mu$ through a contract $(s',c,r')\in\mu$ if (i) $(c,r)\succ_s \mu_s$, (ii) $s\succ_c s'$, and (iii) the matching $(\mu\backslash\{\mu_s,(s',c,r')\})\cup \{(s,c,r)\}$ is feasible. A matching is called \textbf{envy-free} if it does not have envy-blocking contracts.
\end{definition}

In words, a contract $(s,c,r)$ envy-blocks a matching $\mu$ whenever $s$ prefers to be allocated at $c$ with the resource $r$, college $c$ prefers $s$ to one of its assigned students, and the matching remains feasible after making the replacement of contracts. Envy-free matchings are useful in many contexts (see \citet{kamada24}). Note that an empty matching is envy-free in our model. Thus, the existence of an envy-free matching is trivial.

\begin{definition}\label{def:non_wastefulness}
Let $\mu$ be a feasible and individually rational matching. A contract $(s,c,r) \in \Contracts \setminus \mu$ is said to \textbf{waste-block} $\mu$ if (i) $(c,r) \succ_s \mu_s$ and $(ii)$ $(\mu\backslash\{\mu_s\})\cup \{(s,c,r)\}$ is feasible. A matching is called \textbf{non-wasteful} if it does not have any waste-blocking contract.
\end{definition}

In words, a contract $(s,c,r)$ waste-blocks a matching $\mu$ whenever $s$ prefers to be allocated at some college $c$ with the resource $r$, and the matching stays feasible after changing their current contract to the more desired one. 

A contract is said to \textbf{block} a matching if it is either envy or waste-blocking. 

\begin{definition}\label{def:stability}
A feasible individually rational matching is called \textbf{stable} if it has no blocking contracts. 
\end{definition}

Unfortunately, stable matching may not exist, as exposed in \Cref{example:noStableMatching}.\footnote{The same example was firstly given by \citet{kamada17}.}

\begin{example}\label{example:noStableMatching}
Consider a market with two students, two colleges, and one (non-empty) resource with one region containing both colleges. Consider unit quotas $q_{c_1} = q_{c_2} = q_r = 1$ and the preferences given by $\succ_{s_1}: (c_1,r),(c_2,r),\emptyset$; $\succ_{s_2}: (c_2,r),(c_1,r),\emptyset$; $\succ_{c_1}: s_2,s_1$; and $\succ_{c_2}: s_1,s_2$.
There are five possible feasible individually rational matchings: $\mu_0=\{\}$, $\mu_1=\{(s_1,c_1,r)\}$, $\mu_2=\{(s_2,c_2,r)\}$, $\mu_3=\{(s_1,c_2,r)\}$, and $\mu_4=\{(s_2,c_1,r)\}$. However, note that $\mu_0$ is waste-blocked by $(s_2,c_1,r)$, $\mu_1$ is envy-blocked by $(s_2,c_1,r)$, $\mu_2$ is envy-blocked by $(s_1,c_2,r)$, $\mu_3$ is waste-blocked by $(s_1,c_1,r)$, and $\mu_4$ is waste-blocked by $(s_2,c_2,r)$. Thus, all five feasible matchings are unstable.
\end{example}

\subsection{Direct-Envy Stable Matchings}\label{sec:direct_envy_stability}

In order to guarantee the existence of some weakened version of stable matchings we relax the notion of stability by allowing for some kinds of envy-blocking and waste-blocking contracts to exist. The proposed version of weakened stability is inspired by the cutoff stability of \citet{aziz24}, and is built around a new notion of \textit{direct-envy}, which requires for the envious student to be able to obtain everything they need directly from another student.

\begin{definition}\label{def:direct_envy_blocking_contract}
Let $\mu$ be a feasible individually rational matching. A contract $(s,c,r)\in\Contracts\backslash\mu$ is said to \textbf{direct-envy-block} $\mu$ if it envy-blocks $\mu$ through a contract $(s',c,r') \in \mu$, such that, $r \in \{ r', r_0$\}. A matching is called \textbf{direct-envy-free} if it does not have direct-envy-blocking contracts.
\end{definition}

A contact is said to be \textbf{indirect-envy-blocking} if it is envy-blocking, but not direct-envy-blocking. With this in mind, we can adapt the notion of \textit{weak stability} from \citet{kamada15} to our setting.

\begin{definition}\label{def:weak_stability}
A feasible and individually rational matching is called \textbf{weakly stable} if it has no direct-envy-blocking contract and for any waste-blocking contract $(s,c,r)$ it holds (i) $r\neq r_0$, and (ii) in the region $C_r$ that contains $c$, all $q_r$ units of resource $r$ are distributed among admitted students.
\end{definition}

Weak stability is a well known stability notion in the literature of matching under distributional constraints \citep{aziz24,kamada17}. In line with \citet{aziz24}, we focus on studying a stronger stability notion in our setting, namely, \textit{direct-envy stability}. \Cref{prop:direct_envy_free_stability_implies_weak_stability} shows that direct-envy stability implies weak stability. Furthermore, \Cref{prop:IRC_IMC_IDC_des} implies that direct-envy stable matching always exists. To define direct-envy stability, we require the notion of dominance.

\begin{definition}\label{def:dominated_contract}
Let $\mu$ be a feasible and individually rational matching. A waste-blocking contract $(s,c,r)\in\Contracts\setminus\mu$ is said to be (direct-envy) \textbf{dominated} by another contract $(s',c,r')\in\Contracts\backslash\mu$ if (i) $(s',c,r')$ is neither waste-blocking nor direct-envy-blocking under $\mu$, and (ii) $(s',c,r')$ is direct-envy-blocking under $(\mu\backslash\{\mu_s\})\cup \{(s,c,r)\}$.
\end{definition}

\Cref{def:dominated_contract} states that a waste-blocking contract $x = (s,c,r) \in \Contracts\setminus\mu$ is dominated by another not direct-envy-blocking contract $x' \in \Contracts\setminus\mu$ whenever $x'$ becomes direct-envy blocking after replacing the contract $\mu_s$ within $\mu$ by $x$. From an implementation point of view, a central planner may prefer to leave some colleges with empty seats or undistributed resources, even if some students envy those places or resources, in order to avoid to trigger new direct-envy blocking contracts. A waste-blocking contact is said to be \textbf{undominated} if it is not dominated.




\begin{definition}\label{def:direct-envy-stability}
A feasible individually rational matching is called \textbf{direct-envy stable} if it has no direct-envy-blocking contracts and any waste-blocking contract is dominated. 
\end{definition}

\begin{proposition}\label{prop:direct_envy_free_stability_implies_weak_stability}
Any direct-envy stable matching is weakly stable.
\end{proposition}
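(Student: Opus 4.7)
My plan is to notice that direct-envy stability and weak stability both forbid direct-envy-blocking contracts, so the entire content of the proposition concerns waste-blocking contracts. Fix any waste-blocking contract $(s,c,r)$ under a direct-envy stable matching $\mu$; it suffices to show that (i) $r \neq r_0$ and (ii) all $q_r$ units of resource $r$ are distributed in $C_r$. By \Cref{def:direct-envy-stability}, $(s,c,r)$ is dominated by some $(s',c,r') \in \Contracts \setminus \mu$: it neither waste- nor direct-envy-blocks $\mu$, yet direct-envy-blocks $\mu' := (\mu \setminus \{\mu_s\}) \cup \{(s,c,r)\}$ through some $(s'',c,r'') \in \mu'$ with $r' \in \{r'',r_0\}$. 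My overall strategy is to compare the status of the dominator under $\mu$ and $\mu'$, which differ only by the swap $\mu_s \leftrightarrow (s,c,r)$, and to derive a contradiction whenever (i) or (ii) fails.

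A useful preliminary observation is that $\mu_s$ does not involve college $c$: individual rationality together with the paper's convention that any acceptable non-empty resource is strictly preferred to the empty one at the same college, combined with $(c,r)\succ_s \mu_s$, rules out $\mu_s=(c,r^*)$ for every $r^*\in\Resources_0$. Consequently, for feasibility purposes the change $\mu\to\mu'$ consists of exactly one extra seat used at $c$ plus possibly one extra unit of some non-empty resource consumed by student $s$.

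For part (i), assume $r=r_0$ and split on whether $(s'',c,r'')=(s,c,r_0)$. In the affirmative case $r'=r_0$, and the $\mu'$-envy replacement reduces to $(\mu\setminus\{\mu_{s'}\})\cup\{(s',c,r_0)\}$, which is automatically feasible in $\mu$ (the empty resource is unconstrained and $\mu$ has a free seat at $c$ because $\mu_s$ did not occupy one), so $(s',c,r_0)$ already waste-blocks $\mu$, contradicting the dominator property. Otherwise $(s'',c,r'')\in\mu$ with $s''\neq s$: the envy-preference and priority conditions transfer verbatim from $\mu'$ to $\mu$, and the only resource at which the $\mu$-side replacement could exceed its cap is one consumed by $\mu_s$; but a short arithmetic check using $r'\in\{r'',r_0\}$ together with $\mu_s$'s resource being non-empty forces $r'=r_0$ and $r''\neq r'$, contradicting $r'\in\{r'',r_0\}$. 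Hence $(s',c,r')$ direct-envy-blocks $\mu$, again a contradiction. The case $s'=s$ (the dominator is $(s,c,r_0)$-adjacent) is ruled out directly by $(c,r)\succ_s(c,r_0)$ combined with $(c,r')\succ_s\mu'_s$.

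For part (ii), assume $r\neq r_0$ and, for contradiction, $|\mu_r|<q_r$. I do the same case split. If $(s'',c,r'')=(s,c,r)$, the slack $|\mu_r|<q_r$ combined with $\mu_s\not\in\{c\}\times\Resources_0$ yields feasibility of $(\mu\setminus\{\mu_{s'}\})\cup\{(s',c,r')\}$ for both choices $r'\in\{r,r_0\}$, so $(s',c,r')$ already waste-blocks $\mu$, contradiction. If $(s'',c,r'')\in\mu$, the feasibility comparison is identical in spirit to part (i): a $\mu$-side violation could only occur at a non-empty resource $r^{**}$ used by $\mu_s$ with $r^{**}\neq r$, but then the violation arithmetic forces $r'=r^{**}$, while $r'\in\{r'',r_0\}$ and $r''\neq r^{**}$ together give $r^{**}=r_0$, a contradiction. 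Thus $(s',c,r')$ direct-envy-blocks $\mu$, contradicting the dominator property. The main obstacle throughout is this bookkeeping: one must carefully track the seat count at $c$ and the count of every non-empty resource across the two replacements, and then use the direct-envy restriction $r'\in\{r'',r_0\}$ as the single arithmetic lever that closes every potentially exotic case.
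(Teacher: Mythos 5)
There is a genuine gap, and it sits in your ``useful preliminary observation.'' You claim that $\mu_s$ cannot involve college $c$, citing individual rationality and the convention that an acceptable $(c,r)$ satisfies $(c,r)\succ_s(c,r_0)$. That convention gives you this only when $r=r_0$: if $r\neq r_0$, nothing prevents $\mu_s=(s,c,r_0)$ (or $(s,c,r^*)$ with $r^*\neq r$) together with $(c,r)\succ_s\mu_s$ --- in fact the convention \emph{forces} $(c,r)\succ_s(c,r_0)$, so this is the typical situation, and the paper names it explicitly: it is precisely the \emph{resource-blocking} case of \Cref{sec:resource_seat_eff}. Your entire bookkeeping (``the change $\mu\to\mu'$ consists of exactly one extra seat used at $c$'') rests on this false claim, so the free seat at $c$ that you repeatedly invoke is only guaranteed for seat-blocking contracts.

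The damage is concentrated in part (ii). Take a resource-blocking contract: $\mu_s=(s,c,r^*)$, all $q_c$ seats of $c$ filled under $\mu$, $\mu_{s'}$ at a different college, and no student matched to $c$ holding $r$. Then, even with a free unit of $r$, the dominator $(s',c,r)$ (with $r'=r$) is neither waste-blocking under $\mu$ (adding it to $\mu\setminus\{\mu_{s'}\}$ overfills $c$) nor direct-envy-blocking under $\mu$ (direct envy at $c$ needs an envied contract carrying $r$ or needs $r=r_0$), yet it can direct-envy-block $(\mu\setminus\{\mu_s\})\cup\{(s,c,r)\}$ through the new contract $(s,c,r)$. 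So in this configuration no contradiction arises and your case analysis is silent exactly where the work is needed; note this is also the step the paper's own proof passes over most quickly (it simply asserts that $r'=r$ plus undistributed units makes $(s',c,r')$ block $\mu$), so repairing your argument requires a genuinely separate treatment of resource-blocking contracts, not just tighter arithmetic. A smaller issue: in part (i), Case B, your check ends with ``forces $r'=r_0$ and $r''\neq r'$, contradicting $r'\in\{r'',r_0\}$,'' which is not a contradiction, since $r'=r_0$ satisfies $r'\in\{r'',r_0\}$. The clean statement you want there is that a direct-envy swap is always feasible regardless of the rest of the matching (the envied contract at $c$ is removed and either carries the same resource or the envier asks for $r_0$), so if the envied contract already lies in $\mu$ the dominator direct-envy-blocks $\mu$ itself, contradicting \Cref{def:dominated_contract}.
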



\subsection{Seat and Resource Efficiencies - Impossibility Results}\label{sec:resource_seat_eff}

Traditionally, waste in matchings is due to empty seats in colleges. In our model, waste can also be generated by not allocating available resources. Indeed, for a matching $\mu$ that is  waste-blocked by a contract $(s,c,r)$, whether the student $s$ is already allocated at $c$ or somewhere else changes the source of wasting. We refer to a \textbf{seat-blocking} whenever $s$ is not already admitted to $c$ under $\mu$ or a \textbf{resource-blocking}, otherwise. A matching is called \textbf{seat-efficient} or \textbf{resource-efficient} if either there are no seat-blocking contracts or no resource-blocking contracts, respectively.




As we show later, a direct-envy stable matching always exists. A natural question arises after discovering this: \textit{can we always find a direct-envy stable matching that is either resource-efficient, seat-efficient, or envy-free?}. Surprisingly, the answer is no.

\begin{proposition}\label{prop:impossibility_resource} There exist markets without direct-envy stable and resource-efficient matchings.
\end{proposition}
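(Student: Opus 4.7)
The plan is to prove this by explicitly constructing a small market instance and exhaustively checking that no feasible individually rational matching is simultaneously direct-envy stable and resource-efficient. The counterexample should be small: a handful of students, two colleges with tight quotas, and a single non-empty resource $r$ with region covering the bottleneck college $c$.

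First, I would design preferences and priorities so that there is a ``high-priority outsider'' $s'\in\Students$ at some college $c'\neq c$ with $s'\succ_{c}s$ for some lower-priority student $s$ matched at $c$, such that $(c,r)\succ_{s'}\mu_{s'}$ but $\mu_{s'}\succ_{s'}(c,r_0)$. The key reason this dichotomy is desirable is that, in any matching $\mu$ where $s$ sits at $(c,r_0)$ and $r$ is unallocated, the waste-blocking contract $(s,c,r)$ is dominated by $(s',c,r)$: quota tightness at $c$ prevents $(s',c,r)$ from being waste-blocking under $\mu$, indirect-envy through $(s,c,r_0)$ prevents it from being direct-envy blocking under $\mu$ (since $r\neq r_0$), yet after swapping $(s,c,r_0)$ for $(s,c,r)$, the contract $(s',c,r)$ becomes a direct-envy block through the newly allocated $(s,c,r)$. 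The strict inequality $\mu_{s'}\succ_{s'}(c,r_0)$ is what prevents $(s',c,r_0)$ from being a direct-envy block under $\mu$ through $(s,c,r_0)$ itself.

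Second, exhibit the candidate matching $\mu$ (in which $r$ is unallocated and $s$ occupies $(c,r_0)$) and verify via \Cref{def:direct-envy-stability} that it is direct-envy stable but not resource-efficient: $(s,c,r)$ is a resource-blocking contract because $s$ is already at $c$.

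Third, and this is the main obstacle, I would rule out every alternative matching by case analysis. Any matching that assigns $r$ to a student at $c$ other than $s'$ is killed by a direct-envy block from $s'$. Assigning $r$ to $s'$ at $c$ requires moving $s'$ out of $c'$, which frees a seat that some additional student would then direct-envy-block or leaves a waste-block undominated. To make this last case-check watertight, the market must include additional structural constraints — for example a third student with strong priority at $c'$ accepting only $(c',r_0)$, or suitable unacceptability of certain contracts — so that every ``seat-swap'' rearrangement in which $s'$ migrates to $(c,r)$ triggers a new direct-envy blocking contract or an undominated waste-blocking contract somewhere else. The hard part of the argument is precisely calibrating the preferences, priorities, capacities, and the region of $r$ so that these two requirements (feasible domination of $(s,c,r)$ in $\mu$ and failure of every res-efficient alternative) coexist: it is easy to break one direction and very easy to break both with a market that is too large, so the construction must be minimal and tight.
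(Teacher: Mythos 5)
Your proposal correctly isolates the mechanism that makes this proposition true: a matching in which a student $s$ sits at $(c,r_0)$ with a unit of $r$ left unallocated, where the resource-blocking contract $(s,c,r)$ is dominated (in the sense of \Cref{def:dominated_contract}) by a contract $(s',c,r)$ of a higher-priority student who wants only the bundled contract and not $(c,r_0)$. This is exactly the engine of the paper's construction, where the role of $s'$ is played by a student $s_2$ whose sole acceptable contract is $(c_1,r)$, so that $(s_2,c_1,r)$ is indirect-envy-blocking (hence neither waste- nor direct-envy-blocking) under the wasteful matching but becomes a direct-envy block the moment $s_1$ upgrades to $(c_1,r)$.

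However, the proposal stops at a plan and never produces the witness, and for an existence statement that is the entire proof: you explicitly defer the "calibration" of preferences, priorities, capacities and the region, and you concede that it is easy for the two requirements (domination of $(s,c,r)$ in the candidate matching, and failure of every resource-efficient alternative to be direct-envy stable) to conflict. Whether such a calibrated market exists is precisely what must be demonstrated, so the third step cannot be left as a parametrized case sketch. The paper closes this gap with a concrete market of three students, three colleges, one resource whose region contains all colleges, and unit quotas $q_{c_1}=q_{c_2}=q_{c_3}=q_r=1$: $s_2$ accepts only $(c_1,r)$ (your "outsider", here simply unmatched rather than parked at another college $c'$), a third student $s_3$ is top-ranked at $c_1$ and wants only empty-resource contracts, and $s_1$'s preferences are chosen so that an exhaustive case analysis shows the unique direct-envy stable matching is $\{(s_1,c_1,r_0),(s_3,c_2,r_0)\}$, which is resource-blocked by $(s_1,c_1,r)$. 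Your sketch would become a proof only once you exhibit such an instance and carry out the full enumeration (every candidate matching either admits a direct-envy block or an undominated waste-block, or is the resource-inefficient one); as written, the attempt is incomplete.
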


\begin{proposition}\label{prop:impossibility_seat} There exist markets without direct-envy stable and seat-efficient matchings.
\end{proposition}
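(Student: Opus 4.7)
The plan is to exhibit a small market in which every feasible individually rational matching is either not direct-envy stable or not seat-efficient. I will reuse the market of \Cref{example:noStableMatching}, noting that the $(c,r_0)$ options are unlisted and therefore unacceptable, so that the empty resource cannot be used to fill an empty seat. Then I will enumerate its five feasible individually rational matchings $\mu_0,\dots,\mu_4$ and dispatch them case by case.

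First I would rule out three of the five candidates from direct-envy stability. For $\mu_1=\{(s_1,c_1,r)\}$ and $\mu_2=\{(s_2,c_2,r)\}$ the contract $(s_2,c_1,r)$ (respectively, $(s_1,c_2,r)$) is direct-envy-blocking, since the envied contract uses the same resource $r$, the envier has higher priority at the college, and the swap is feasible under the unit caps. For the empty matching $\mu_0$, every acceptable contract is a waste-blocking contract (adding one contract to $\emptyset$ preserves all caps), and no dominator exists because clause~(i) of \Cref{def:dominated_contract} requires a non-waste-blocking candidate.

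The core of the argument is to verify that $\mu_3=\{(s_1,c_2,r)\}$ and $\mu_4=\{(s_2,c_1,r)\}$ are direct-envy stable yet fail seat-efficiency. Direct-envy freeness of $\mu_3$ follows because the only potential envier, $(s_2,c_2,r)$, is blocked by $s_1\succ_{c_2}s_2$, and empty-resource envies are ruled out by unacceptability. The only waste-blocking contract is $(s_1,c_1,r)$, which moves $s_1$ from $c_2$ to $c_1$ and is therefore seat-blocking; hence $\mu_3$ is not seat-efficient. To close the proof of direct-envy stability I would show that $(s_1,c_1,r)$ is dominated by $(s_2,c_1,r)$: under $\mu_3$ this candidate is not waste-blocking, because adding it would require a second unit of $r$ and violate $q_r=1$, and it is not direct-envy-blocking, because $\mu_3$ contains no $c_1$-contract through which to envy-block; yet after the substitution replacing $\mu_{s_1}$ by $(s_1,c_1,r)$, the candidate becomes direct-envy-blocking through the fresh contract $(s_1,c_1,r)$. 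The case of $\mu_4$ is symmetric.

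The main obstacle is exactly this dominance verification: it hinges on the tight regional cap $q_r=1$ simultaneously preventing $(s_2,c_1,r)$ from being waste-blocking under $\mu_3$ while enabling it to become direct-envy-blocking after the seat-blocking substitution. Once this delicate interaction between feasibility and envy is pinned down, the impossibility follows immediately, since only $\mu_3$ and $\mu_4$ are direct-envy stable and each of them admits a seat-blocking contract.
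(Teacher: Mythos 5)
Your proof is correct and takes essentially the same approach as the paper: it uses the market of \Cref{example:noStableMatching}, enumerates the five feasible individually rational matchings, and shows that the only direct-envy stable ones admit seat-blocking contracts. Your identification of $\mu_3$ and $\mu_4$ as the direct-envy stable matchings (including the dominance verification for $(s_1,c_1,r)$ via $(s_2,c_1,r)$) is the right one; the paper's terse proof refers to ``$\mu_1$ and $\mu_2$,'' which appears to be a labeling slip relative to the example's notation.
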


\begin{proposition}\label{prop:impossibility_envy} There exist markets without direct-envy stable and envy-free matchings.
\end{proposition}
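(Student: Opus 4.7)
The plan is to exhibit an explicit small RRC instance in which every envy-free matching fails direct-envy stability, and then verify this by a finite case analysis. Since the empty matching is trivially envy-free, some envy-free matching always exists, so the conclusion must come from showing that each such matching harbors an undominated waste-blocking contract. The construction I would use is a three-student enlargement of Example~\ref{example:noStableMatching}, designed so that the dominator mechanism that rescued the singletons $\{(s_1,c_2,r)\}$ and $\{(s_2,c_1,r)\}$ there is broken by the presence of a new waste-blocker involving only the empty resource $r_0$.

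Concretely, I would take the students $s_1,s_2,s_3$, colleges $c_1,c_2$ with unit seat capacity, and one non-empty resource $r$ of unit quota with region $\{c_1,c_2\}$. The preferences of $s_1,s_2$ and the priorities of $c_1,c_2$ are exactly as in Example~\ref{example:noStableMatching}; the third student has $\succ_{s_3}:(c_1,r_0),(c_2,r_0),\emptyset$ and is lowest priority at both colleges. First, I would enumerate the feasible individually rational matchings and argue that the only envy-free ones are $\emptyset$, $\{(s_1,c_2,r)\}$, and $\{(s_2,c_1,r)\}$: any matching that includes a contract for $s_3$ creates an indirect-envy-blocking contract by $s_1$ or $s_2$ (who outrank $s_3$ and would prefer the $c_i$-seat with $r$), and any matching that places $s_1$ at $c_1$ or $s_2$ at $c_2$ is direct-envy-blocked by the respectively higher-priority rival as in Example~\ref{example:noStableMatching}. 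Second, for each of the three envy-free matchings I would exhibit the waste-blocking contract $(s_3,c_i,r_0)$ at the vacant college and show it is undominated. The only candidate dominator at $c_i$ is the contract that would reassign $r$ to the high-priority rival ($s_2$ at $c_1$ or $s_1$ at $c_2$); after the swap this contract envy-blocks only through $(s_3,c_i,r_0)$, and because the envying resource $r$ differs from the envied $r_0$, this envy-blocking is merely \emph{indirect} in the sense of Definition~\ref{def:direct_envy_blocking_contract}. Hence the candidate is not direct-envy-blocking, and therefore by Definition~\ref{def:dominated_contract} it fails to dominate $(s_3,c_i,r_0)$; for the empty matching the argument is even simpler, because every candidate dominator is itself already waste-blocking. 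Collecting these three observations yields the proposition.

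The main obstacle is precisely to make the dominator mechanism fail without introducing a new stable matching, and the role of $s_3$ is exactly to create waste-blocking contracts whose natural dominators envy the post-swap matching only indirectly. This contrasts with Example~\ref{example:noStableMatching}, where the only waste-blocker at the vacant college involves $r$ and therefore its dominator does direct-envy-block the post-swap matching; the added student $s_3$ is what introduces an $r_0$-waste-blocker whose candidate dominator is disqualified by the direct-versus-indirect envy distinction. Once the instance is fixed, the case analysis is short and routine; the calibration of preferences so that $s_3$'s waste-blocker survives precisely because of the indirect-envy technicality is the conceptual core of the argument.
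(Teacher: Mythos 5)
Your construction is correct, and it reaches the proposition by a genuinely different route from the paper. The paper's proof uses a three-student, three-college, one-resource market and enumerates all \emph{direct-envy stable} matchings (there are exactly two), exhibiting an indirect-envy-blocking contract for each; you instead enumerate the \emph{envy-free} matchings of a smaller two-college market obtained by adding a low-priority student $s_3$ who only wants $r_0$ to \Cref{example:noStableMatching}, and show each of the three envy-free matchings ($\emptyset$, $\{(s_1,c_2,r)\}$, $\{(s_2,c_1,r)\}$) carries an undominated waste-blocking contract $(s_3,c_i,r_0)$ at the vacant college, hence fails direct-envy stability. Both directions of enumeration are logically sufficient, and I verified your case analysis: the only candidate dominator satisfying condition (i) of \Cref{def:dominated_contract} is the rival's $r$-contract at the vacant college, and it indeed fails condition (ii). One small correction to your justification: in your instance that candidate ($(s_2,c_1,r)$ after adding $(s_3,c_1,r_0)$ to $\{(s_1,c_2,r)\}$, and symmetrically) is not merely an \emph{indirect} envy-block after the swap — it is not an envy-block at all, since the post-replacement matching $\{(s_1,c_2,r),(s_2,c_1,r)\}$ uses two units of $r$ and violates $q_r=1$ (the envied student $s_3$ holds $r_0$, so removing her contract frees no resource). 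Either way the candidate is not direct-envy-blocking, so the dominance fails and your conclusion stands. What each approach buys: your enumeration over envy-free matchings is shorter (eleven feasible matchings, three envy-free), and it isolates cleanly the role of the empty resource in breaking dominance; the paper's enumeration over direct-envy stable matchings is longer but simultaneously exhibits the direct-envy stable matchings of the instance and the precise indirect-envy-blocking contracts they retain, which is informative for the surrounding discussion of indirect envy.
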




\section{Heredity: Comparison with \citet{aziz24}}\label{sec:aziz}

A matching market model satisfies \textbf{heredity} whenever the feasibility of a matching is monotone in the number of agents matched to the objects. \citet{aziz24} proposed a \textit{summer internship problem} model (SIP) that satisfies heredity and involves overlapping regional constraints. \Cref{exmp:RRC_no_heredity} shows that RRC does not satisfy heredity. However, both models, matching markets with RRC constraints and matching markets à la \cite{aziz24} can be naturally mapped into the other one. In particular, given a matching markets with RRC constraints, it is enough to treat each college-resource pair as a project to obtain a SIP model.\footnote{If we treat each college and resource under RRC as a supervisor under SIP, then to admit a student each project should receive one unit of budget from each relevant supervisor, rather than from just one as under SIP. This technical difference between RRC and SIP can be avoided by assuming that the set of all feasible matchings was already calculated somehow and appears to be the same for both models.} With this in mind, the following equivalences can be established between the two models.
\vspace{-0.1cm}
\begin{itemize}
\item[$\bullet$] direct-envy-freeness under RRC and fairness under SIP;

\item[$\bullet$] non-wastefulness under RRC and strong non-wastefulness under SIP;

\item[$\bullet$] direct-envy stability under RRC and cutoff stability under SIP;

\item[$\bullet$] direct-envy-freeness plus non-wastefulness under RRC and strong stability under SIP.
\end{itemize}

Thus, despite the close relationship, these two models have a core difference that does not allow to treat RRC as a special case of SIP for stability analysis: unlike SIP, under RRC a student's envy may be \textit{indirect}, i.e., it may involve taking a seat from another student matched to some college-resource pair (some project) and using it for being matched to another college-resource pair (another project). In other words, under SIP there may not be exact analogs of an empty resource and an indirect envy, since these two features yield such possibilities for envy that are not present under strong stability of SIP.

As a result, in order to fully analyze stability properties of a matching under RRC we cannot exclusively focus on its analog that satisfies heredity. Still, we leverage the cutoff techniques from \citet{aziz24} to work with indirect stability.

\section{Cutoffs}\label{sec:cutoffs_and_cutoff_envy_free}

Inspired by \cite{aziz24}, we find direct-envy stable matchings by distributing resources among colleges in a \textit{balanced} way with respect to their preferences, through cutoffs.

\begin{definition}\label{def:cutoff_matrix}
A \textbf{cutoff profile} is a matrix $\CutoffsMatrix = (\Cutoffs[r][c])^{c\in\Colleges}_{r\in\Resources_0}$ where each $\Cutoffs[r][c]\in \{0,...,|{\Students}|\}$, such that, for any $c \in \Colleges$ and $r \in \Resources$, $\Cutoffs[r_0][c] = \Cutoffs[0][c] \geq \Cutoffs[r][c]$. A cutoff is \textbf{maximal} if it is equal to $|{\Students}|$.
\end{definition}

Cutoffs represent the \textit{top students} considered by a college as possible members. Formally, given a pair $(c,r) \in \Colleges\times \Resources_0$, such that $\Cutoffs[r][c] = k$, for $k \in \mathbb{N}$, we construct a matching only considering those contracts $(s,c,r) \in \Contracts$ for which $s$ is ranked on the $k$-top of $c$'s preferences. In particular, whenever $\Cutoffs[r][c] = 0$, no student can be allocated to $c$ with the resource $r$, while for $\Cutoffs[r][c] = |{\Students}|$, all contracts $\{(s,c,r), s \in \Students\}$ are possible to be included into the matching. Note that the condition $\Cutoffs[0][c] \geq \Cutoffs[r][c]$ ensures that whenever the contract $(s,c,r)$ can be included into the matching, $(s,c,r_0)$ can be included as well. We denote $\Contracts(\CutoffsMatrix)$ to the possible contracts to be considered for a matching according to the cutoffs profile $\CutoffsMatrix$.


\begin{definition}
Given a cutoff profile $\CutoffsMatrix$, we denote $\mu(\CutoffsMatrix)$ its \textbf{induced matching} obtained by including for each student $s$, their most preferred contract $(s,c,r)$ among those in $\Contracts(\CutoffsMatrix)$. 
\end{definition}

Note that $\mu(\CutoffsMatrix)$, a priory, \textbf{may not be feasible} as we may include to many contracts and not to respect the colleges or resources quotas. Moreover, even if the induced matching is feasible, it might be unstable. Finally, note that, given $\CutoffsMatrix$, computing $\mu(\CutoffsMatrix)$ can be done in polynomial time.

Given a cutoff profile $\CutoffsMatrix$, we denote $\CutoffsMatrix + \mathbf{1}_{r}^c$ the cutoff profile obtained by increasing $1$ unit to the entry $(c,r)$.

\begin{definition}
A cutoff profile $\CutoffsMatrix$ is said to be \textbf{optimal} if $\mu(\CutoffsMatrix)$ is feasible and for any non-maximal cutoff $\Cutoffs_{r}^c$, $\mu(\CutoffsMatrix + \mathbf{1}_{r}^c)$ is infeasible.
\end{definition}


\begin{proposition}\label{prop:optimal_cuts_des}
Any optimal cutoff profile induces a direct-envy stable matching. Conversely, any direct-envy stable matching can be induced by the (unique) optimal cutoff profile.
\end{proposition}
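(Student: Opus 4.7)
Let $\CutoffsMatrix$ be optimal and $\mu := \mu(\CutoffsMatrix)$. Direct-envy-freeness follows by contradiction: any hypothetical direct-envy-block $(s,c,r)$ through $(s',c,r') \in \mu$ with $r \in \{r', r_0\}$ would, via $s \succ_c s'$ together with the profile constraint $\Cutoffs[0][c] \geq \Cutoffs[r'][c]$, force $(s,c,r) \in \Contracts(\CutoffsMatrix)$, contradicting that $\mu_s$ is $s$'s top preference among $\Contracts(\CutoffsMatrix)$. For waste-block domination, let $(s,c,r)$ waste-block $\mu$; then $\Cutoffs[r][c]$ is non-maximal (else $s$ would access $(c,r)$), so by optimality $\mu(\CutoffsMatrix + \mathbf{1}_r^c)$ is infeasible. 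The only newly accessible contract is $(s^*,c,r)$ with $s^* := s_{(\Cutoffs[r][c]+1)}$; infeasibility forces $s^*$ to strictly prefer $(c,r)$ over $\mu_{s^*}$, giving $\mu(\CutoffsMatrix + \mathbf{1}_r^c) = (\mu \setminus \{\mu_{s^*}\}) \cup \{(s^*,c,r)\}$ infeasible. I then claim $(s^*,c,r)$ dominates $(s,c,r)$: it is neither waste-blocking (the swap coincides with the above infeasible matching) nor direct-envy-blocking $\mu$ (by the first step), but it direct-envy-blocks $(\mu \setminus \{\mu_s\}) \cup \{(s,c,r)\}$ through $(s,c,r)$, using $s^* \succ_c s$ (with $s = s^*$ ruled out by the same infeasibility) and feasibility of the combined swap $\mu \setminus \{\mu_s, \mu_{s^*}\} \cup \{(s^*,c,r)\}$, obtained from a case analysis on the positions of $\mu_s$ and $\mu_{s^*}$ relative to $c$ and $r$.

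\textbf{Converse direction.} Given direct-envy stable $\mu$, let $K^*(c, r)$ denote the rank in $c$'s preferences of the first student strictly preferring $(c,r)$ to $\mu_s$ (or $|\Students|+1$ if none), and set $\Cutoffs^*[0][c] := K^*(c, r_0) - 1$ and $\Cutoffs^*[r][c] := \min(\Cutoffs^*[0][c], K^*(c, r) - 1)$ for $r \in \Resources$, so that the profile constraint holds by construction. To establish $\mu(\CutoffsMatrix^*) = \mu$, I check that each student's top accessible contract is their match: any $(c',r')$ strictly preferred to $\mu_s$ is excluded since $s$'s rank at $c'$ is at least $K^*(c', r')$, while $\mu_s = (c,r)$ is included because direct-envy-freeness forces the lowest-preference matched student at $(c,r)$ to have rank strictly below $\min(K^*(c,r), K^*(c,r_0))$, since otherwise a higher-ranked preferrer of $(c,r)$ or $(c,r_0)$ would direct-envy-block through that matched contract via a swap that remains feasible at $c$ and at the involved resource.

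\textbf{Main obstacle: optimality and uniqueness.} The crux is showing that every valid increment of a non-maximal $\Cutoffs^*[r][c]$ yields an infeasible induced matching, equivalently that $(s^*, c, r)$ with $s^* := s_{(\Cutoffs^*[r][c]+1)}$ is not a waste-block of $\mu$. Assuming for contradiction that it is, direct-envy stability supplies a dominator $(s', c, r')$ direct-envy-blocking $(\mu \setminus \{\mu_{s^*}\}) \cup \{(s^*, c, r)\}$ through some $(s'', c, r'')$, and I derive a contradiction by cases: (a) if $(s'', c, r'') = (s^*, c, r)$, then $s' \succ_c s^*$, and since $s^*$ is the first preferrer of $(c, r)$, $s'$ does not prefer $(c, r)$, which forces $r' = r_0$ and $(c, r)$ unacceptable for $s'$; a feasibility accounting using slack at $c$ inherited from the waste-swap (together with the presence of $s^*$ or another low-ranked $c$-contract in $\mu$) then shows $(s', c, r_0)$ was already waste- or direct-envy-blocking $\mu$, contradicting the dominator hypothesis; (b) if $(s'', c, r'') \in \mu$, the swap certifying direct-envy in the modified matching coincides, up to cancellation of $\mu_{s^*}$ and $(s^*, c, r)$, with the same swap in $\mu$ (any participation of these two contracts only preserves or tightens the relevant resource inequalities), so $(s', c, r')$ already direct-envy-blocks $\mu$. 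Uniqueness then follows by extremality: any other optimal profile $\tilde{\CutoffsMatrix}$ inducing $\mu$ cannot exceed $\CutoffsMatrix^*$ entry-wise (a preferrer would take the contract and break induction) nor fall below it (an increment up to $\CutoffsMatrix^*$ would preserve feasibility and induction of $\mu$, contradicting $\tilde{\CutoffsMatrix}$'s optimality).
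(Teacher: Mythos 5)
Your forward direction is correct and follows essentially the paper's route (direct-envy-freeness from accessibility of higher-ranked students, plus domination of every waste-block); your unified use of the threshold student $s^*$ at rank $\theta_r^c+1$ is a clean way to package what the paper does by splitting into seat- and resource-blocking cases, and the feasibility of the combined swap indeed follows by monotonicity from the feasibility of the waste-block. The problem is in the converse direction, at the optimality step, and it is caused precisely by your modification $\theta_r^{*c}:=\min(\theta_{r_0}^{*c},K^*(c,r)-1)$. Your claim that ``every valid increment of a non-maximal $\theta_r^{*c}$ yields an infeasible induced matching, equivalently $(s^*,c,r)$ is not a waste-block of $\mu$'' is false when the cap binds, i.e.\ when $\theta_r^{*c}=\theta_{r_0}^{*c}<K^*(c,r)-1$: then the student at rank $\theta_r^{*c}+1$ is the first student preferring $(c,r_0)$ to their match, not the first student preferring $(c,r)$, and she may find $(c,r)$ unacceptable. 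In that case incrementing $\theta_r^{*c}$ by one changes nothing, the induced matching is still $\mu$ and hence feasible, so your profile is \emph{not} optimal in the sense of the paper's definition, even though $(s^*,c,r)$ is indeed not a waste-block. Your subsequent case analysis (a)/(b) is built on the premise ``$s^*$ is the first preferrer of $(c,r)$'', which only holds when the cap does not bind, so the binding case is simply not covered.

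The binding case is not vacuous. Take two students, two colleges $c,d$ with unit quotas, one resource $r$ with $C_r=\{c,d\}$, $q_r=1$; let $\succ_c:\ s_1,s_2$; $\succ_{s_1}:(c,r),(c,r_0)$; $\succ_{s_2}:(c,r_0),(d,r_0)$, with $(c,r)$ unacceptable to $s_2$. Then $\mu=\{(s_1,c,r),(s_2,d,r_0)\}$ is (direct-envy) stable, $K^*(c,r_0)=2$ while no student strictly prefers $(c,r)$ to their match, so your construction gives $\theta_{r_0}^{*c}=\theta_r^{*c}=1$, and raising $\theta_r^{*c}$ to $2$ alone keeps the induced matching equal to $\mu$, which is feasible. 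So either one reads ``$+\mathbf{1}_r^c$'' as the mechanisms do (raising $\theta_{r_0}^c$ simultaneously when the two cutoffs are tied), in which case optimality of your profile in the tied case still needs a separate argument — namely that whenever some student prefers $(c,r_0)$ to her match, direct-envy stability forces $c$ to be full of higher-ranked students (this itself requires a domination argument, since otherwise $(s,c,r_0)$ would be an undominated waste-block), so that the simultaneous increment is infeasible — or one keeps the literal definition, and then your constructed profile fails optimality and the uniqueness/extremality argument collapses with it. Note the paper's own construction (take $\theta_r^c$ maximal when no student prefers $(c,r)$ to her match) silently violates the cutoff-profile constraint $\theta_{r_0}^c\ge\theta_r^c$ in exactly this example, so the issue is real and needs the extra lemma above; your write-up, having noticed the constraint and patched it with the $\min$, must supply that missing argument for the tied entries before the converse (and hence uniqueness) is established.
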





\section{Mechanisms}\label{section:mechanisms}



This section is devoted to present several matching mechanisms,
divided in two main families: cutoff mechanisms and serial dictatorships. Both families, starting from a (feasible) empty matching, update the current solution preserving its feasibility until no further update is possible. Therefore, the output matchings of the mechanisms are also always feasible.

\subsection{Cutoff Mechanisms}

Cutoff mechanisms, starting from the most restrictive cutoffs ($\CutoffsMatrix_0 \equiv 0$), create a sequence of increasing cutoff profiles until finding an optimal one. Once the final cutoff profile is computed, the induced matching is implemented. We present four different cutoffs mechanisms. 

\medskip

\noindent 1. \textbf{Increasing Random Cutoffs (IRC)}. The IRC mechanism, at each iteration $t$, uniformly picks a non-maximal cutoff and increases it in one unit if the resulting matching remains feasible. Note that, whenever the chosen cutoff $\Cutoffs[r][c]$ has the same value than $\Cutoffs[r_0][c]$, both of them are simultaneously increased to respect the condition on \Cref{def:cutoff_matrix}.
\medskip

\noindent 2. \textbf{Increasing Minimal Cutoffs (IMC)}. The IMC mechanism, at each iteration $t$, picks a permutation of the colleges $\sigma \in \Sigma(\Colleges)$ and, for each $c \in \sigma(\Colleges)$, it simultaneously increases the largest set of minimal-value cutoffs of $c$ that preserve the feasibility of the matching. By construction, the condition on \Cref{def:cutoff_matrix} is always verified.
\medskip

\noindent 3. \textbf{Increasing Deep Cutoffs (IDC)}. The IDC mechanism is the adaptation of \cite{aziz24} mechanism to our setting. At each iteration $t$, the mechanism picks a permutation of the cutoffs $\sigma \in \Sigma(\Colleges\times\Resources_0)$ and, for each $(c,r) \in \sigma(\Colleges\times\Resources_0)$, it increases $\Cutoffs[r][c]$ as much as possible without breaking the matching feasibility. Whenever $\Cutoffs[r][c] = \Cutoffs[r_0][c]$ and IDC wants to increase $\Cutoffs[r][c]$ by one unit, both cutoffs are simultaneously increased to respect the condition on \Cref{def:cutoff_matrix}.
\medskip

\noindent 4. \textbf{Increasing Uniform Cutoffs (IUC)}. The IUC mechanism, at each time $t$, picks $\sigma \in \Sigma(\Colleges)$ a permutation of the colleges, and for each $c \in \sigma(\Colleges)$, it increases all its cutoffs simultaneously, keeping all of them equal, whenever the resulting matching remains feasible. 
\medskip

Note that, given a feasible matching, the complexity of checking whether the matching remains feasible after either switching contracts for one student, or adding a new contract, is trivial and does not depend on the size of the market.\footnote{This occurs whenever the tentative matching changes under any of the mechanisms proposed in this paper.} Denote by $N=\max\{|{\Students}|,|{\Colleges}|,|{\Resources_0}|\}$. 

\begin{proposition}\label{prop:IRC_IMC_IDC_des}
IRC, IMC, and IDC always compute a direct-envy stable matching and run in $\mathcal{O}(N^3)$ time.
\end{proposition}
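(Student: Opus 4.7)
The plan is to leverage \Cref{prop:optimal_cuts_des}: any optimal cutoff profile induces a direct-envy stable matching, so it suffices to show that each of IRC, IMC, and IDC terminates at an optimal cutoff profile and then to bound the runtime.

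First, I would establish invariants common to all three mechanisms. Each starts from $\CutoffsMatrix_0 \equiv 0$, whose induced matching is empty and hence feasible. Every attempted update is accepted only if the resulting induced matching is feasible, and each mechanism co-increments $\Cutoffs[r_0][c]$ whenever needed to respect $\Cutoffs[r_0][c] \geq \Cutoffs[r][c]$ from \Cref{def:cutoff_matrix}. Since cutoffs are non-decreasing integers in $\{0,\dots,|{\Students}|\}$ and there are $|{\Colleges}|\cdot|{\Resources_0}|=O(N^2)$ of them, the total number of successful increments across the entire run is bounded by $|{\Colleges}|\cdot|{\Resources_0}|\cdot|{\Students}|=O(N^3)$, which ensures termination.

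Second, I would argue optimality at the halting profile. Each mechanism is run until its admissible-move rule yields no further increment, and I would show that this coincides exactly with the optimality condition. For IRC this is immediate: the process halts precisely when every non-maximal cutoff, if incremented in isolation, would break feasibility. For IDC the argument is to iterate sweeps until one full sweep produces no increment; within a sweep each processed cutoff is pushed to its maximum value compatible with the current profile, so once a sweep is unproductive every non-maximal cutoff is already saturated and the profile is optimal. For IMC, I would show that if no college admits a simultaneous increment of its minimum-valued cutoffs, then no single cutoff anywhere admits an increment either, using the coupling $\Cutoffs[r_0][c] \geq \Cutoffs[r][c]$ together with the fact that incrementing $\Cutoffs[r][c]$ from $k$ to $k+1$ introduces exactly one new candidate contract, namely the one involving the $(k+1)$-st ranked student under $\succ_c$, whose possible reassignment is the sole source of infeasibility.

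Third, for complexity, the paragraph preceding the proposition notes that feasibility of adding or swapping a single contract under a feasible matching is checkable in $O(1)$. Maintaining each student's current top available contract also costs $O(1)$ per increment. Summing over the $O(N^3)$ successful increments and amortizing the $O(N^2)$ per-sweep scan cost against the increments that sweep produces yields the claimed $\mathcal{O}(N^3)$ runtime. The main obstacle I anticipate is the IMC optimality step: the ``minimum-value only'' discipline must be shown to be expressive enough to reach an optimal profile, which requires ruling out scenarios in which some non-minimum cutoff at $c$ is feasibly incrementable while its minimum-value cutoffs are all stuck. Handling this will require a careful case analysis of how a single-cutoff increment affects the assignment of the student ranked just beyond the cutoff, together with the coupling constraint on $\Cutoffs[r_0][c]$.
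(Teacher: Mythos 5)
Your proposal follows essentially the same route as the paper: reduce to \Cref{prop:optimal_cuts_des} by arguing that each mechanism halts at an optimal cutoff profile, then bound the runtime by the $O(N^2)$ cutoffs each incremented at most $N$ times, giving $\mathcal{O}(N^3)$. In fact the paper's own proof is terser---it simply asserts optimality of the terminal profile ``by construction''---so the IMC saturation subtlety you flag is not treated there either, and your sketch is at least as detailed as the published argument.
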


\begin{proposition}\label{prop:properties_IUC}
IUC always computes an  envy-free matching and runs in $\mathcal{O}(N^2)$ time.
\end{proposition}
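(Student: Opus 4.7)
The plan is to prove envy-freeness via a uniform-cutoff invariant maintained by IUC and then bound the runtime by a counting argument; the envy-freeness step is the substantive part. First I would establish the invariant that throughout IUC's execution the current profile $\CutoffsMatrix$ satisfies $\Cutoffs[r][c] = \Cutoffs[r_0][c]$ for every $c \in \Colleges$ and every $r \in \Resources_0$. This should follow by a direct induction on iterations: the initial profile $\CutoffsMatrix_0 \equiv 0$ is uniform across each college, and IUC's update rule for a college $c$ raises all of $c$'s cutoffs by one unit simultaneously, preserving the invariant. I would write $\theta_c$ for this common value at termination.

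Next, I would show that the output $\mu := \mu(\CutoffsMatrix)$ is envy-free; feasibility is automatic, as IUC commits only feasibility-preserving updates. Toward a contradiction, I would suppose $(s,c,r) \in \Contracts \setminus \mu$ envy-blocks $\mu$ through some $(s',c,r') \in \mu$. Since $(s',c,r') \in \Contracts(\CutoffsMatrix)$, the student $s'$ ranks among the top $\Cutoffs[r'][c] = \theta_c$ students of $c$; because $s \succ_c s'$, the student $s$ also ranks among the top $\theta_c$, and by the uniform-cutoff invariant $\Cutoffs[r][c] = \theta_c$, so $(s,c,r) \in \Contracts(\CutoffsMatrix)$. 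But $\mu_s$ is by construction the $\succ_s$-best contract involving $s$ in $\Contracts(\CutoffsMatrix)$, which contradicts $(c,r) \succ_s \mu_s$ and concludes the argument.

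For the runtime, I would note that the invariant collapses IUC's state to a single vector $(\theta_c)_{c \in \Colleges}$, so the total number of successful increments is at most $\sum_c \theta_c \le |\Colleges| \cdot |\Students| = O(N^2)$. Each attempt to increment $\theta_c$ only grows the option set of the single student at rank $\theta_c + 1$ of $c$, so the induced matching and its feasibility can be updated in $O(1)$ by the paper's own observation about feasibility checks. Bundling this $O(1)$ work per attempt with an $O(N^2)$ bound on the total number of attempts made across outer-loop rounds will give the claimed $O(N^2)$ runtime.

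The hard part will be the envy-freeness argument, and within it the uniform-cutoff invariant: without uniformity, $\Cutoffs[r][c]$ could be strictly below $\Cutoffs[r'][c]$, and the envy-blocking contract $(s,c,r)$ might fail to lie in $\Contracts(\CutoffsMatrix)$ even though $s$ is highly ranked at $c$, so the contradiction would no longer go through. The invariant is exactly what shifts the envy question about $\mu$ back onto the much simpler per-student optimality built into the definition of $\mu(\CutoffsMatrix)$.
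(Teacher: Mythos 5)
Your proof is correct and follows essentially the same route as the paper's (much terser) argument: envy-freeness comes from each college keeping all its cutoffs equal, so any student preferred by $c$ to an admitted student already has all of $c$'s contracts available and is thus already holding something she weakly prefers, and the $\mathcal{O}(N^2)$ bound comes from counting one cutoff per college, each raised at most $|\Students|$ times, with constant-time updates. You simply make explicit the uniform-cutoff invariant and the per-student optimality of the induced matching, which the paper leaves as "by construction."
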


\begin{definition}
A mechanism is called strategy-proof if no agent can obtain a strictly better contract by misreporting their preferences.    
\end{definition}

\begin{proposition}\label{prop:cutoff_mechs_not_strategy-proof}
IRC, IMC, IDC, and IUC are not strategy-proof and do not always find a stable matching whenever one exists.
\end{proposition}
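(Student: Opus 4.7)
The plan is to prove both failures by exhibiting explicit counter-examples, one tailored to each of the four mechanisms.

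For the claim that none of the mechanisms always produces a stable matching when one exists, I will construct, for each mechanism, a small market admitting a stable matching on which some admissible run of the mechanism terminates at a non-stable (but still direct-envy-stable or envy-free, as guaranteed by \Cref{prop:IRC_IMC_IDC_des} and \Cref{prop:properties_IUC}) output. For IUC, the simplest example has a single college $c$ with $q_c = 2$, one non-empty resource with $q_r = 1$, and two students both preferring $(c,r)$ to $(c,r_0)$, where $c$ prefers $s_1$ to $s_2$: the unique stable matching is $\{(s_1,c,r),(s_2,c,r_0)\}$, corresponding to $\theta_r^c=1,\theta_{r_0}^c=2$, but since IUC keeps both cutoffs equal it can only reach $\theta_r^c=\theta_{r_0}^c=1$, outputting the wasteful $\{(s_1,c,r)\}$. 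For IRC, IMC, and IDC, I will use markets with several direct-envy-stable matchings only some of which are stable; by pinning down a specific random sequence of non-maximal cutoff picks (IRC), a specific cutoff permutation (IDC), or a specific tie-breaking rule among minimal-value subsets (IMC), the mechanism will settle on a direct-envy-stable matching with either an indirect-envy-block or a dominated-waste-block. A concrete template has three students, one college, and two scarce non-empty resources where two students accept only $r_1$ and one only $r_2$, so that the stable matching requires the markedly asymmetric cutoff profile that the mechanism can miss under an unfavorable run.

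For the non-strategy-proofness claim, I will reuse each of the above markets and, for each, exhibit a student who gains strictly by misreporting. The natural manipulation is to shorten or reorder the preference list so that certain cutoff updates become infeasible, steering the mechanism toward a profile more favorable to the deviator; for the IUC example, for instance, truncating the list so that $(c,r_0)$ becomes unacceptable can prevent the wasteful early increase of $\theta_{r_0}^c$ and redirect the mechanism toward a matching in which the deviator is assigned a strictly preferred contract. Analogous preference truncations work for IRC, IDC, and IMC.

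The main obstacle I expect is verifying, for each randomized or permutation-dependent mechanism, that the exhibited trajectory actually terminates at the claimed matching and that no further feasible cutoff increment exists. I will handle this by tabulating the induced matching $\mu(\CutoffsMatrix)$ after every cutoff step and checking termination via \Cref{def:cutoff_matrix} and the definition of the mechanism, which is tedious but routine once the market is fixed. Once the trajectories are pinned down, both the non-stability of the output (by exhibiting an explicit blocking contract of \Cref{def:envy_freeness} or \Cref{def:non_wastefulness} that is not dominated in the sense of \Cref{def:dominated_contract}) and the gain from the misreport (by direct comparison of truthful and manipulated runs) follow by inspection.
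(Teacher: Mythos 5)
Your overall strategy --- small explicit markets plus a pinned-down realization of the mechanism's random choices --- is exactly the paper's, and your IUC example for the ``misses an existing stable matching'' claim is correct and complete. But the two pieces you leave as templates do not hold up as stated. The single-college template for IRC/IMC/IDC (two scarce resources, two students accepting only $(c,r_1)$, one accepting only $(c,r_2)$) cannot work for IMC: because IMC raises all of a college's minimal-value cutoffs simultaneously, with a single college it admits students in that college's priority order, so its very first step already locks in the top-ranked acceptable student with her desired resource and the ``unfavorable run'' you need (pushing $\theta_{r_2}^{c}$ deep before $\theta_{r_1}^{c}$ reaches the top student) never occurs; in the instantiations of your template IMC terminates at a stable matching, whatever the tie-breaking among minimal-value subsets. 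The failure has to come from order-dependence \emph{across colleges}, which is why the paper uses a two-college, one-resource, unit-quota market ($\succ_{s_1}:(c_2,r),(c_1,r)$; $\succ_{s_2}:(c_2,r)$; $\succ_{c_1}:s_1,s_2$; $\succ_{c_2}:s_2,s_1$): if $c_1$ (resp.\ the cutoff $(c_1,r)$) is processed first, all four mechanisms output $\{(s_1,c_1,r)\}$, while the unique stable matching is $\{(s_2,c_2,r)\}$ --- one market settles the first claim for all four mechanisms at once.

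The manipulation you spell out for non-strategy-proofness is also wrong as described. In your IUC market the two cutoffs of $c$ move together, so there is no separate ``early increase of $\theta_{r_0}^{c}$'' to prevent, and declaring $(c,r_0)$ unacceptable (for either student) leaves the run unchanged, since the infeasibility at cutoff value $2$ comes from both students demanding $r$. A profitable deviation does exist in that market --- $s_2$ declares $(c,r)$ unacceptable, both cutoffs then rise to $2$, and she receives $(c,r_0)$ instead of being unmatched --- but it is not the one you describe, and ``analogous truncations work for IRC, IDC, and IMC'' is asserted rather than verified. The paper avoids this mechanism-by-mechanism case analysis with a second market containing only the empty resource, in which all four cutoff mechanisms coincide with college-proposing deferred acceptance, so a single standard truncation by one student certifies non-strategy-proofness for all of them simultaneously. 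So: same method as the paper, but your proposal needs a multi-college example for the IMC (and, uniformly, IRC/IDC) stability failure and corrected, explicitly verified manipulations before it is a proof.
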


\begin{proposition}\label{prop:cutoffs_mechanisms_are_stable_if_no_resources}
In a setting without non-empty resources IRC, IMC, IDC, and IUC are reduced to the same mechanism, which always computes a stable matching.
\end{proposition}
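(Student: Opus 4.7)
The plan is in three moves. First, reduce the four mechanisms to a common procedure in the no-resources setting. With $\Resources = \emptyset$ and $\Resources_0 = \{r_0\}$, every cutoff profile has a single column, so $\CutoffsMatrix$ degenerates to $(\Cutoffs[r_0][c])_{c \in \Colleges}$; the constraint $\Cutoffs[r_0][c] \geq \Cutoffs[r][c]$ of \Cref{def:cutoff_matrix} is vacuous; and feasibility of $\mu(\CutoffsMatrix)$ reduces to $|\mu_c| \leq q_c$ for each $c$, since the empty resource has capacity $|\Students|$ and never binds. IMC's ``minimal-value cutoff set'' per college is then the singleton $\{\Cutoffs[r_0][c]\}$, IUC's ``keep all cutoffs equal'' constraint is vacuous, IRC picks a cutoff uniformly to raise by one, and IDC raises a chosen cutoff as much as feasibility allows. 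All four therefore implement a greedy monotone ascent of the per-college cutoffs starting from $\CutoffsMatrix \equiv 0$, structurally identical to a college-proposing deferred acceptance process (each feasible increment is one college offering its next-preferred seat to a student who takes it only if she prefers it to her current tentative match), so all terminate at the same optimal profile $\CutoffsMatrix^*$, namely the one encoding the college-optimal stable matching.

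Second, I would transfer the direct-envy stability guarantee of \Cref{prop:optimal_cuts_des} into full envy-freeness. In the no-resources setting every contract uses $r_0$, so the side condition $r \in \{r', r_0\}$ of \Cref{def:direct_envy_blocking_contract} is automatic, and direct-envy-freeness coincides with envy-freeness. Hence $\mu^* := \mu(\CutoffsMatrix^*)$ admits no envy-blocking contract.

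Third --- and this is what I expect to be the main obstacle --- I would upgrade from direct-envy stability to full stability by ruling out waste-blocking contracts, since direct-envy stability a priori only requires waste-blocking contracts to be dominated. Suppose $(s, c, r_0)$ waste-blocks $\mu^*$. Then $(c, r_0) \succ_s \mu^*_s$ and, since the swap must remain feasible, $|\mu^*_c| < q_c$. Let $k$ be the rank of $s$ in $\succ_c$. If $\Cutoffs[r_0][c] \geq k$ under $\CutoffsMatrix^*$, then $s$ lies in $c$'s pool and would have been assigned $c$ as her most preferred eligible option, contradicting $\mu^*_s \neq (c, r_0)$. Otherwise $\Cutoffs[r_0][c]$ is non-maximal, so by optimality $\mu(\CutoffsMatrix^* + \mathbf{1}_{r_0}^c)$ is infeasible; but this single increment only enlarges the pool of the rank-$(\Cutoffs[r_0][c]+1)$ student $s^*$ at $c$, so only $s^*$'s assignment can change. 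Infeasibility can then occur only if $s^*$ switches into $c$, forcing $|\mu^*_c|+1 > q_c$, i.e., $|\mu^*_c| \geq q_c$, which contradicts $|\mu^*_c| < q_c$.

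Combining envy-freeness with the absence of waste-blocking shows that $\mu^*$ is stable, closing the argument.
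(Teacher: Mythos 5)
Your proposal is correct, and its first move coincides with the paper's entire proof: the paper simply observes that with only $r_0$ all four mechanisms become college-proposing deferred acceptance (CDA), and then imports everything — same output for any order of updates, and stability — from the classical Gale--Shapley theory. Where you differ is the stability part: instead of transferring CDA's known stability, you derive it from the cutoff machinery itself, noting that (a) with a single resource the side condition $r\in\{r',r_0\}$ of \Cref{def:direct_envy_blocking_contract} is vacuous, so direct-envy-freeness of the matching induced by an optimal profile (\Cref{prop:optimal_cuts_des}) is already full envy-freeness, and (b) a waste-blocking contract at an optimal profile is impossible, because raising $\Cutoffs[r_0][c]$ by one only admits the rank-$(\Cutoffs[r_0][c]+1)$ student into $c$'s pool, and infeasibility of that increment would force $|\mu^*_c|\geq q_c$, contradicting the empty seat; your case analysis here is sound (in particular, with no non-empty resources a resource-blocking contract cannot arise, so $s\notin\mu^*_c$ and $|\mu^*_c|<q_c$ is forced). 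This buys something the paper's proof does not state: every optimal cutoff profile is stable in the no-resource setting, so stability of all four mechanisms follows without even needing them to coincide, and without invoking domination. Conversely, the paper's route gets the "same mechanism/same matching" claim for free from CDA's order-independence (McVitie--Wilson), whereas in your write-up that claim — "all terminate at the same optimal profile, namely the college-optimal stable matching" — is asserted via the structural analogy rather than argued; this is the same level of detail as the paper itself (and the analogy has the harmless wrinkle that a full college may still raise its cutoff past students who would decline, which is not literally a proposal step but never changes the induced matching), so it is a stylistic rather than a substantive gap.
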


\subsection{Serial Dictatorships}

Serial Dictatorships mechanisms, starting from an empty matching, add one contract per student until all students have a contract or either all remaining contracts are unacceptable or no further contract can be included without breaking the feasibility of the matching. 

Given a matching $\mu$ and a student $s\in\Students$ that has no match under $\mu$, we define \textbf{the most preferred contract of student $s$ given the matching $\mu$}, denoted as $x^\mu_s$, as the the most preferred acceptable contract of $s$, such that $\mu \cup \{x^\mu_s\}$ is feasible. Note that $x^\mu_s$ may be empty. We present two mechanisms.
\medskip

\noindent 1. \textbf{Random Serial Dictatorship (RSD)}. Given a students permutation $\sigma \in \Sigma({\Students})$ as input, the RSD mechanism, at each iteration $t \in \{1,...,|{\Students}|\}$, computes the most preferred contract of the $t$-th student $s_t \in \sigma(\Students)$ given the current matching $\mu(t-1)$, and sets $\mu(t) = \mu(t-1) \cup x_{s_t}^{\mu(t-1)}$.
\medskip

\noindent 2. \textbf{Controlled Serial Dictatorship (CSD)}. The CSD mechanism, as the RSD mechanism, computes the most preferred contract of each student given the current matching and adds it. However, instead of prefixing the order in which students are included into the matching, at each iteration $t$, the CSD mechanism adds to the matching the highest ranked contract by some school among the unmatched students' most preferred feasible contracts at the moment, breaking ties uniformly. 



\begin{definition}
A matching $\mu$ is \textbf{Pareto-efficient} (for students) if there is no other feasible matching $\mu'$, such that, each student $s$ weakly prefers $\mu'_s$ to $\mu_s$, and at least one student strictly prefers $\mu'_s$ to $\mu_s$.
\end{definition}

\begin{proposition}\label{prop:properties_SD}
RSD and CSD always find non-wasteful and Pareto-efficient matchings, and run in $\mathcal{O}(N)$ time. RSD is strategy-proof, while CSD is not.
\end{proposition}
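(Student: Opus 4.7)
The plan is to exploit a common greedy property shared by both RSD and CSD: whenever the mechanism processes a student $s$ at some step $t$, it assigns $\mu_s=x_s^{\mu(t-1)}$, the $\succ_s$-maximum acceptable contract $(s,c,r)$ for which $\mu(t-1)\cup\{(s,c,r)\}$ remains feasible. Non-wastefulness then follows directly: if some $(s,c,r)$ waste-blocks the final $\mu$, one has $(c,r)\succ_s\mu_s$ and $(\mu\setminus\{\mu_s\})\cup\{(s,c,r)\}$ feasible; since later iterations only append contracts, $\mu(t-1)\subseteq\mu\setminus\{\mu_s\}$, and monotonicity of feasibility under subsets yields that $\mu(t-1)\cup\{(s,c,r)\}$ is also feasible, contradicting the greedy choice of $\mu_s$.

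For Pareto-efficiency, I would let $s_1,s_2,\ldots$ denote the order in which students are processed (fixed by the input permutation under RSD, dynamically produced by the college-first rule under CSD) and show by induction on $t$ that any feasible $\mu'$ Pareto-dominating $\mu$ must satisfy $\mu'_{s_t}=\mu_{s_t}$. The induction hypothesis implies that $\mu'$ restricted to $\{s_1,\ldots,s_{t-1}\}$ equals $\mu(t-1)$, so $\mu(t-1)\cup\{\mu'_{s_t}\}$ is feasible as a subset of $\mu'$, and the greedy rule forces $\mu_{s_t}\succeq_{s_t}\mu'_{s_t}$; combined with Pareto-dominance this yields equality, so iterating gives $\mu'=\mu$, contradicting strict improvement for some student. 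The complexity claim reduces to a single sweep over $|\Students|\leq N$ students, each iteration performing a constant-time feasibility check (per the section preamble) together with, for CSD, a ranked maximum over the current candidates, fitting within the advertised $\mathcal{O}(N)$ budget.

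For RSD strategy-proofness, for any realization of the permutation $\sigma$ the step-$t$ feasibility set depends only on $\mu(t-1)$, which is a function of earlier students' reports and therefore independent of $s_t$'s report; since the mechanism returns the reported top feasible option, truthful reporting returns the \emph{true} top feasible option, which no misreport can strictly improve upon. For CSD I would exhibit a small witness: two students, two colleges, one resource $r$ covering both colleges with $q_{c_1}=q_{c_2}=q_r=1$; both students rank $(c_1,r),(c_2,r)$, with $\succ_{c_1}:s_1,s_2$ and $\succ_{c_2}:s_2,s_1$. Under truthful reporting CSD selects $(s_1,c_1,r)$ at step 1 (the unique top-ranked candidate), depleting $r$ and leaving $s_2$ unmatched; if $s_2$ instead reports $(c_2,r)$ first, step 1 ties between $(s_1,c_1,r)$ and $(s_2,c_2,r)$, and uniform tie-breaking matches $s_2$ at $c_2$ with positive probability, a strict improvement over being unmatched.

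The main obstacle I anticipate is the Pareto-efficiency induction for CSD, because the processing order there is \emph{endogenous}: one must verify that once the inductive hypothesis forces $\mu'$ to agree with $\mu$ on students $s_1,\ldots,s_{t-1}$, the step-$t$ selection rule still singles out the same $s_t$. This holds because the rule depends only on the current matching and the set of unmatched students, both of which coincide in the $\mu$- and $\mu'$-runs under the induction hypothesis, so the argument closes essentially as in the fixed-order RSD case.
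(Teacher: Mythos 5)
Your proposal is correct and follows essentially the same route as the paper, whose proof simply asserts Pareto-efficiency and RSD strategy-proofness ``by construction,'' derives non-wastefulness from Pareto-efficiency, counts the $N$ student calls for the $\mathcal{O}(N)$ bound, and gives a one-line intuition for CSD manipulability; your greedy-plus-subset-feasibility argument, the induction along the processing order, and the concrete misreport example are exactly the details that instantiate those assertions. The only minor point to add is that in the Pareto induction for CSD a student left unmatched when the mechanism stops also cannot receive a contract in a dominating $\mu'$, since by the same subset-monotonicity of feasibility that contract could have been added when CSD halted.
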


\section{Numerical Results}\label{sec:empirics}

We study the stability of the mechanisms introduced in \Cref{section:mechanisms} on an exhaustive collection of synthetic markets with agents having preferences with different levels of alignment, that is, agents agree on which contracts are more attractive. Note, however, that even under the full alignment regimes, agents can present different preferences rankings as, for each agent, we first sample the complete preference ordering and then randomly discard elements to simulate their unacceptable contracts. 

Additionally, we have considered balanced and unbalanced markets. In balanced markets, quotas add up the number of students. 
In unbalanced markets there could be shortcuts of students, colleges seats, or resources. The implementation details can be found in \Cref{app:implementation_details}. 

\Cref{tab:blocking_contracts_five_resources} presents the (average) number of each kind of blocking contracts of the matchings obtained with each mechanism. We remark the already known theoretical properties of the mechanisms: all cutoffs mechanisms produce direct-envy stable matchings, IUC produces envy-free matchings, and both serial dictatorship mechanisms produce no waste.

We remark two extra properties. First, although never completely stable, IMC presents the least number of blocking contracts. It is important to note that the presented perfect performance of IMC for colleges with full alignment on preferences does not always hold. However, and this is the second property to be remarked, CSD becomes fully stable under this regime. 

\begin{proposition}\label{prop:CSD_is_stable_under_colleges_alignment}
CSD is stable and strategy-proof whenever colleges' preferences are fully aligned.
\end{proposition}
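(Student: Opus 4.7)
The plan is to show that under full alignment CSD reduces to a serial dictatorship in a common priority order, then to leverage this reduction for both stability and strategy-proofness. Let $\succ^{*}$ be the common strict ranking shared by every college. Each candidate contract $x_{s}^{\mu(t-1)}$ is ranked by its associated college at $s$'s $\succ^{*}$-position, so CSD picks at each iteration the $\succ^{*}$-maximal unmatched student whose current most-preferred feasible contract is non-empty; and since the matching only grows, once $x_{s} = \emptyset$ the student remains permanently unmatched. Consequently matched students are processed strictly in $\succ^{*}$-order.

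To prove stability, non-wastefulness is inherited from Proposition~\ref{prop:properties_SD}, so it suffices to rule out envy-blocks. Suppose $(s, c, r) \notin \mu$ envy-blocks $\mu$ through $(s', c, r') \in \mu$. Alignment and $s \succ_{c} s'$ give $s \succ^{*} s'$; letting $t^{*}$ be the iteration at which CSD determines $s$'s match (or, if $s$ is permanently unmatched, the iteration at which $s'$ gets matched), one has $(s', c, r') \notin \mu(t^{*}-1)$ and $\mu(t^{*}-1) \cup \{(s, c, r)\}$ infeasible (else CSD would have given $s$ a contract weakly preferred to $(c, r)$). If the infeasibility lies at $c$'s quota, then $\mu(t^{*}-1) \subseteq \mu$ already holds $q_{c}$ contracts at $c$, so $(s', c, r')$ forces $\mu$ past the quota, contradicting feasibility. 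Otherwise $r$'s regional cap is saturated at $\mu(t^{*}-1)$, and monotonicity pins $\mu$ to exactly $q_{r}$ contracts with $r$ in $C_{r}$, all lying in $\mu(t^{*}-1)$. Counting $r$-in-$C_{r}$ contracts in the envy-swap $(\mu \setminus \{\mu_{s}, (s', c, r')\}) \cup \{(s, c, r)\}$ then yields the feasibility requirement $[r' = r] + [\mu_{s} \text{ uses } r \text{ in } C_{r}] \ge 1$: the first disjunct introduces $(s', c, r)$ as a new $r$-in-$C_{r}$ contract outside $\mu(t^{*}-1)$, contradicting the count; the second forces $\mu(t^{*}) = \mu(t^{*}-1) \cup \{\mu_{s}\}$ to hold $q_{r}+1$ such contracts, which CSD would not have constructed.

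Strategy-proofness then follows from the standard serial-dictatorship argument: $s$'s $\succ^{*}$-rank, her processing iteration, and the state $\mu(t^{*}-1)$ present at that iteration are all determined by the common ranking and by the other students' reports, independent of $s$'s own. Hence $s$'s only leverage is which feasible contract CSD picks for her, and truth-telling yields her true top acceptable feasible contract, weakly dominating the outcome of any misreport. The main obstacle is the resource-cap subcase of the stability argument, where reconciling the envy-swap's feasibility with the cap's saturation at $\mu(t^{*}-1)$ requires careful accounting of $r$-consuming contracts across $\mu(t^{*}-1)$, $\mu(t^{*})$, and $\mu$; the remaining ingredients are routine.
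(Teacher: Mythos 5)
Your proof is correct and takes essentially the same route as the paper's (much terser) argument: under full alignment CSD reduces to a serial dictatorship in the common ranking, non-wastefulness is inherited from \Cref{prop:properties_SD}, the fact that higher-ranked students choose before lower-ranked ones rules out envy, and the dictatorship structure gives strategy-proofness. As a minor simplification, your case-by-case counting in the resource-cap subcase can be replaced by the observation that feasibility is downward closed, so feasibility of the envy-swap, which contains $\mu(t^*-1)\cup\{(s,c,r)\}$ as a subset, directly contradicts the infeasibility of adding $(s,c,r)$ at $s$'s turn.
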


\Cref{tab:blocking_contracts_two_resources_unbalanced_horizontal_market} presents the same metric for unbalanced markets with no alignment on preferences and two resources (one empty and one non-empty). Colleges up/down represents that the sum of seats over all colleges is double/half the total number of students. Resources up/down work similarly for the non-empty resource. As for balanced markets, IMC remains as the best mechanism. 

For complementary experiments on both balanced and unbalanced markets, please refer to \Cref{app:numerical_results}. To summarize both our theoretical and empirical findings, \Cref{fig:Guide_for_Policymakers} presents the \textbf{guide for policymakers} as the main takeaway of our article, that is aimed at maximizing the stability of the final matching under RRC, taking into account the policymaker's agenda.

\begin{figure}[H]
\centering
 \fbox{\parbox{.85\textwidth}{
    \includegraphics[scale = 0.85]{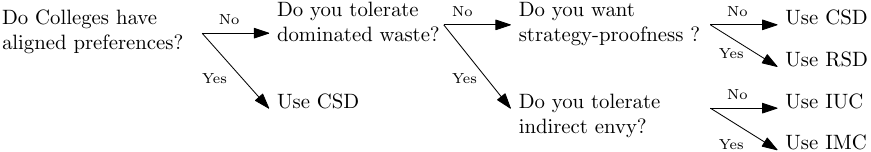}
    \caption{Guide for Policymakers}
    \label{fig:Guide_for_Policymakers}
}}
\end{figure}

\begin{table}[H]
\centering
\footnotesize{
\caption{Average numbers of blocking contracts over $100$ balanced markets with $100$ students, $10$ colleges, and $5$ resources $\Resources_0 = \{r_0,r_1,...,r_4\}$.}
\begin{tabular}{l|cccccc}
\toprule
Alignment & Mechanism & Resource & Seat   & Direct-Envy   & Indirect-Envy       & Total  \\
\hline
& IRC & 0.39±0.706  & 0.53±0.866     & 0.0±0.0       & 10.4±6.528  & 11.32±6.79\\
& IMC & 0.07±0.255  & 0.68±1.199     & 0.0±0.0       & 2.31±2.626  & \textbf{3.06±2.859}\\
No & IDC & 1.1±1.5     & 0.64±1.005     & 0.0±0.0       & 21.5±10.977 & 23.24±11.894\\
Alignment& IUC & 0.0±0.0     & 160.94±126.304 & 0.0±0.0       & 0.0±0.0     & 160.94±126.304\\
& RSD & 0.0±0.0     & 0.0±0.0        & 61.17±19.701  & 10.63±5.836 & 71.8±22.588\\
& CSD & 0.0±0.0     & 0.0±0.0        & 23.48±11.279  & 4.17±2.683  & 27.65±12.324\\
\toprule
& IRC & 1.73±2.315  & 1.03±1.758     & 0.0±0.0        & 13.06±10.469 & 15.82±11.646 \\
Students & IMC & 0.06±0.369  & 0.89±1.28      & 0.0±0.0        & 1.98±2.709   & \textbf{2.93±3.468 }\\
Semi & IDC & 4.8±5.255   & 1.61±2.825     & 0.0±0.0        & 36.06±28.787 & 42.47±32.336\\
Alignment & IUC & 0.0±0.0     & 354.27±321.113 & 0.0±0.0        & 0.0±0.0      & 354.27±321.113\\
& RSD & 0.0±0.0     & 0.0±0.0        & 213.12±110.211 & 23.66±22.408 & 236.78±100.5\\
& CSD & 0.0±0.0     & 0.0±0.0        & 47.28±23.946   & 7.15±8.476   & 54.43±26.635\\
\toprule
& IRC & 2.3±2.598   & 1.26±2.028     & 0.0±0.0       & 15.16±6.497   & 18.72±7.898\\
Student & IMC & 0.01±0.099  & 1.41±2.657     & 0.0±0.0       & 6.52±3.422    & \textbf{7.94±4.005}\\
Full & IDC & 10.33±9.483 & 2.47±3.64      & 0.0±0.0       & 43.86±19.592  & 56.66±26.388\\
Alignment& IUC & 0.0±0.0     & 1186.73±254.84 & 0.0±0.0       & 0.0±0.0       & 1186.73±254.84\\
& RSD & 0.0±0.0     & 0.0±0.0        & 488.88±71.458 & 153.24±27.468 & 642.12±83.045\\
& CSD & 0.0±0.0     & 0.0±0.0        & 23.99±9.436   & 7.69±4.762    & 31.68±13.009\\
\toprule
& IRC & 0.86±1.364  & 0.16±0.463    & 0.0±0.0       & 7.28±5.733   & 8.3±6.793 \\
College & IMC & 0.0±0.0     & 0.0±0.0       & 0.0±0.0       & 0.0±0.0      & \textbf{0.0±0.0}\\
Full & IDC & 1.74±2.23   & 0.61±0.882    & 0.0±0.0       & 20.36±13.481 & 22.71±15.196\\
Alignment & IUC & 0.0±0.0     & 154.0±144.681 & 0.0±0.0       & 0.0±0.0      & 154.0±144.681\\
& RSD & 0.0±0.0     & 0.0±0.0       & 62.73±23.231  & 9.61±5.181   & 72.34±24.671\\
& CSD & 0.0±0.0     & 0.0±0.0       & 0.0±0.0       & 0.0±0.0      & \textbf{0.0±0.0} \\
\toprule
& IRC & 4.54±4.579  & 0.44±1.219     & 0.0±0.0       & 33.15±11.332  & 38.13±14.432\\
College and &  IMC & 0.0±0.0     & 0.0±0.0        & 0.0±0.0       & 0.01±0.099    & \textbf{0.01±0.099}\\
Student & IDC & 13.95±9.907 & 1.31±2.208     & 0.0±0.0       & 84.9±27.937   & 100.16±36.121\\
Full &  IUC & 0.0±0.0     & 1006.75±210.17 & 0.0±0.0       & 0.0±0.0       & 1006.75±210.17\\
Alignment & RSD & 0.0±0.0     & 0.0±0.0        & 490.34±61.682 & 156.66±36.087 & 647.0±81.581\\
 & CSD & 0.0±0.0     & 0.0±0.0        & 0.0±0.0       & 0.0±0.0       & \textbf{0.0±0.0} \\
\hline
\end{tabular}
\label{tab:blocking_contracts_five_resources}
}
\end{table}

\begin{table}[ht]
\centering
\footnotesize{
\caption{Average numbers of blocking contracts over $100$ unbalanced markets without alignment with $100$ students, $10$ colleges, and $2$ resources $\Resources_0 = \{r_0,r_1\}$. }
\begin{tabular}{l|cccccc}
\toprule
Type of Unbalance & Mechanism & Resource & Seat   & Direct-Envy   & Indirect-Envy       & Total   \\
\hline
& IRC & 0.01±0.099  & 5.32±6.591   & 0.0±0.0       & 0.13±0.541      & \textbf{5.46±6.647}\\
& IMC & 0.0±0.0     & 5.4±6.713    & 0.0±0.0       & 0.08±0.504      & \textbf{5.48±6.798}\\
Colleges Up & IDC & 0.06±0.341  & 5.88±7.192   & 0.0±0.0       & 0.21±0.941      & 6.15±7.162\\
Resources Up & IUC & 0.0±0.0     & 143.4±172.39 & 0.0±0.0       & 0.0±0.0         & 143.4±172.39\\
& RSD & 0.0±0.0     & 0.0±0.0      & 39.09±42.618  & 0.0±0.0         & 39.09±42.618 \\
& CSD & 0.0±0.0     & 0.0±0.0      & 6.6±8.573     & 0.0±0.0         & 6.6±8.573  \\
\hline
& IRC & 1.81±1.809  & 0.0±0.0       & 0.0±0.0       & 2.82±2.304      & 4.63±3.882 \\
& IMC & 0.0±0.0     & 0.0±0.0       & 0.0±0.0       & 0.18±0.74       & \textbf{0.18±0.74} \\
Colleges Down & IDC & 7.29±4.401  & 0.0±0.0       & 0.0±0.0       & 10.41±5.972     & 17.7±9.829 \\
Resources Up & IUC & 0.0±0.0     & 48.12±115.202 & 0.0±0.0       & 0.0±0.0         & 48.12±115.202 \\
& RSD & 0.0±0.0     & 0.0±0.0       & 406.59±55.2   & 6.47±8.491      & 413.06±55.814 \\
& CSD & 0.0±0.0     & 0.0±0.0       & 78.87±21.933  & 1.29±2.151      & 80.16±22.562  \\
\hline
& IRC & 0.0±0.0     & 9.44±5.298    & 0.0±0.0       & 0.1±0.436       & \textbf{9.54±5.311} \\
& IMC & 0.0±0.0     & 9.36±5.289    & 0.0±0.0       & 0.18±0.477      & \textbf{9.54±5.294} \\
Colleges Up &  IDC & 0.0±0.0     & 9.36±89.194   & 0.0±0.0       & 0.18±0.0        & \textbf{9.54±89.194} \\
Resources Down & IUC & 0.0±0.0     & 356.75±89.194 & 0.0±0.0       & 0.0±0.0         & 356.75±89.194 \\
& RSD & 0.0±0.0     & 0.0±0.0       & 86.12±28.607  & 0.0±0.0         & 86.12±28.607 \\
& CSD & 0.0±0.0     & 0.0±0.0       & 11.93±8.171   & 0.0±0.0         & 11.93±8.171 \\
\hline
& IRC & 0.73±1.341  & 0.0±0.0       & 0.0±0.0       & 1.82±1.824      & 2.55±2.819 \\
& IMC & 0.08±0.578  & 0.0±0.0       & 0.0±0.0       & 0.84±1.111      & \textbf{0.92±1.454} \\
Colleges Down & IDC & 0.08±0.0    & 0.0±143.69    & 0.0±0.0       & 0.84±0.0        & \textbf{0.92±143.69} \\
Resources Down & IUC & 0.0±0.0     & 198.16±143.69 & 0.0±0.0       & 0.0±0.0         & 198.16±143.69 \\
& RSD & 0.0±0.0     & 0.0±0.0       & 378.88±68.604 & 2.86±6.547      & 381.74±69.471 \\
& CSD & 0.0±0.0     & 0.0±0.0       & 62.42±21.973  & 0.55±1.602      & 62.97±22.45 \\
\toprule
\end{tabular}
\label{tab:blocking_contracts_two_resources_unbalanced_horizontal_market}
}
\end{table}

\section{Conclusion}\label{sec:conclusion}

In this paper, we introduce a novel kind of distributional constraints, resource-regional caps (RRC). Since a stable matching may not exist in general, we propose three weaker versions of stability that restore existence: envy-freeness plus resource efficiency, non-wastefulness (seat- and resource-efficiency), and direct-envy stability. The latter is based on the idea of a direct envy of one student towards another, which arises if the second student, object of envy, has everything that the first one, subject of envy, requests. A direct-envy stable matching does not tolerate direct envy and any waste that does not create more direct envy upon being resolved. Then, we show that direct-envy stability is incompatible with either seat-/resource-efficiency, or envy-freeness.

For each weaker version of stability we design at least one corresponding mechanism: Increasing Uniform Cutoffs (IUC) is envy-free and resource-efficient, Random and Controlled Serial Dictatorships (RSD and CSD) are non-wasteful, and Increasing Random/Deep/Minimal Cutoffs (IRC, IDC, and IMC) are direct-envy stable. CSD and IMC are designed to be more stable (on average) versions of, correspondingly, RSD and IRC. 

In addition, we test stability performances of all six mechanisms under (i) different numbers of non-empty resources types, (ii) several types of preferences alignment: no alignment, partial alignment, and full alignment, and (iii) balanced and unbalanced markets. As a result, theoretical and numerical findings suggest that, under no alignment of colleges' preferences, IMC is the most stable on average, while otherwise, CSD is stable and strategy-proof. Thus, given that in the real world most markets have no perfect alignment of preferences, IMC is the go-to mechanism for maximizing stability of the resulting matching under RRC constraints.

Finally, we propose some potential avenues for the future research on RRC. Is it possible to strengthen the stability of existing matchings beyond the direct-envy stability?

\Cref{prop:impossibility_resource,prop:impossibility_seat,prop:impossibility_envy} show that the answer is negative with current types of blocking contracts. Thus, one may disaggregate sets of blocking contracts even further. For instance, there could be defined two types of indirect envy: the candidate with indirect envy either (i) already has a needed resource, or (ii) requests an empty resource from the college. 

Going back to real-life motivation, French college admissions not only operate with common dormitories but also use affirmative action \citep{sokolov25}, while there is no (at least known to us) research that combines these two features. Moreover, the French college admissions system includes dormitory-specific rankings of students. Incorporating resource-specific priority structures within each region into the RRC framework would be a promising direction for further research.

\bibliographystyle{ACM-Reference-Format}
\bibliography{sample}

\appendix

\section{Missing Proofs}\label{app:proofs}




As discussed in \Cref{sec:aziz}, both models of matching under RRC and SIP can be naturally mapped into the other one. However, from RRC to SIP the modeling must be carefully done, as taking students as agents and colleges as objects, without considering resources, leads to a model without hereditary property, as the following example illustrates.

\begin{example}\label{exmp:RRC_no_heredity}
Consider a market with five students, one college, and one non-empty resource. The college has exactly three seats to distribute. The region containing the only college has exactly one unit of resource to distribute. Students $s_1$, $s_2$, and $s_3$ do not need a unit of a non-empty resource $r$ for admission, while students $s_4$ and $s_5$ can be admitted only with a unit of $r$.

Consider $X'=\{(s_1,c_1,r_0),(s_2,c_1,r_0),(s_3,c_1,r_0)\}$ and $X''=\{(s_4,c_1,r),(s_5,c_1,r)\}$. First, $|X''|<|X'|$, second, $X'$ is feasible, while $X''$ is not. Thus, RRC constraints do not satisfy heredity. \qed
\end{example}

We give next the missing proofs of the main body.
\medskip

\noindent\textit{Proof of \Cref{prop:direct_envy_free_stability_implies_weak_stability}}. Take a direct-envy stable matching $\mu$. By definition, there is no direct-envy-blocking contracts, and any waste-blocking contract is dominated. Let $(s,c,r) \in\Contracts\backslash\mu$ be a waste-blocking contract. We need to prove that $r\neq r_0$ and that in the region that contains college $c$, all units of resource $r$ are distributed among admitted students.

    Since $\mu$ is direct-envy stable, $(s,c,r)$ is dominated, i.e., there exists another contract $(s',c,r')\in\Contracts\backslash\mu$, such that (i) $(s',c,r')$ is not waste-blocking or direct-envy-blocking under $\mu$ and (ii) $(s',c,r')$ is direct-envy-blocking under $(\mu\backslash\{\mu_s\})\cup \{(s,c,r)\}$. In particular, since it is $(s,c,r)$ that makes $(s',c,r')$ direct-envy-blocking under $(\mu\backslash\{\mu_s\})\cup \{(s,c,r)\}$, then $r = r' \neq r_0$, as whenever $r' = r_0$, $(s',c,r')$ should be either direct-envy-blocking or waste-blocking under $\mu$.

    Suppose now that under $\mu$ in the region containing $c$, some units of resource $r$ are not distributed. Hence, $(s',c,r')$ should be either direct-envy-blocking or waste-blocking under $\mu$, since $r' = r$, which is a contradiction. We conclude that matching $\mu$ is weakly stable.
\qed\medskip

\noindent\textit{Proof of \Cref{prop:impossibility_seat}}. Consider the market in \Cref{example:noStableMatching}. It is not hard to see that both direct-envy stable matchings $\mu_1$ and $\mu_2$ have seat-blocking contracts.
\qed\medskip

\noindent\textit{Proof of \Cref{prop:impossibility_resource}}. Consider a market with three students, three colleges, and one (non-empty) resource $r$ with region containing all colleges. Quotas are $q_{c_1} = q_{c_2} = q_{c_3} = q_r = 1$. Preferences are:

    \begin{table}[H]
        \centering
        \begin{tabular}{l c c}
            $\succ_{s_1}: (c_1,r),(c_1,r_0),(c_3,r),(c_2,r_0),(c_3,r_0),\emptyset$ &  & $\succ_{c_1}: s_3,s_2,s_1$\\
            $\succ_{s_2}: (c_1,r),\emptyset$ &  & $\succ_{c_2}: s_1,s_2,s_3$\\
            $\succ_{s_3}: (c_2,r_0),(c_1,r_0),\emptyset$ &  & $\succ_{c_3}: s_3,s_1,s_2$
        \end{tabular}
    \end{table}

    Let us look for all direct-envy stable matchings. Take a direct-envy stable matching $\mu$.
    
    Suppose that $(s_2,c_1,r)\in \mu$. Thus, none of the following contracts is in $\mu$ due to feasibility: $(s_3,c_1,r_0),(s_1,c_1,r)$,$(s_1,c_1,r_0),(s_1,c_3,r)$. Also, if $(s_3,c_2,r_0)\in \mu$, then $s_1$ has a direct envy towards $s_3$, so $(s_3,c_2,r_0)\not\in \mu$. Therefore, the matching should be $\mu=\{(s_1,c_2,r_0),(s_2,c_1,r)\}$, which is not direct-envy stable: $s_3$ direct envies $s_1$ with $(s_3,c_1,r_0)$. Therefore, $(s_2,c_1,r)\not\in \mu$.

    So far, we know that $(s_2,c_1,r)\not\in \mu$, i.e., $s_2$ is unmatched. Thus, $(s_1,c_1,r)\not\in \mu$, since otherwise $s_2$ direct envies $s_1$. Suppose that $(s_3,c_1,r_0)\in \mu$. Thus, none of the following contracts is in $\mu$ due to feasibility: $(s_3,c_2,r_0),(s_1,c_1,r_0)$. Thus, the matching should be $\mu=\{(s_1,c_3,r),(s_3,c_1,r_0)\}$, which is not direct-envy stable: $(s_3,c_2,r_0)$ is undominated waste-blocking contract. Therefore, $(s_3,c_1,r_0)\not\in \mu$.

    So far, we know that none of the following contracts is in $\mu$: $(s_1,c_1,r),(s_2,c_1,r),(s_3,c_1,r_0)$. Suppose that $(s_1,c_1,r_0)\in \mu$. Thus, none of the following contracts is in $\mu$ due to feasibility: $(s_1,c_2,r_0),(s_1,c_3,r_0),(s_1,c_3,r)$. Thus, the matching should be $\mu=\{(s_1,c_1,r_0),(s_3,c_2,r_0)\}$, which is direct-envy stable, but not resource-efficient: $(s_1,c_1,r)$ is resource-blocking.

    So far, we know that none of the following contracts is in $\mu$: $(s_1,c_1,r),(s_1,c_1,r_0),(s_2,c_1,r),$ $(s_3,c_1,r_0)$. Suppose that $(s_3,c_2,r_0)\in \mu$. Thus, the matching should be $\mu=\{(s_1,c_3,r),(s_3,c_2,r_0)\}$, which is not direct-envy stable: $(s_1,c_1,r_0)$ is undominated waste-blocking contract. Therefore, $(s_3,c_2,r_0)\not\in \mu$.

    So far, we know that none of the following contracts is in $\mu$: $(s_1,c_1,r),(s_1,c_1,r_0),(s_2,c_1,r),$ $(s_3,c_1,r_0),(s_3,c_2,r_0)$. Thus, the matching should be $\mu=\{(s_1,c_3,r)\}$, which is not direct-envy stable: $(s_3,c_2,r_0)$ is undominated waste-blocking contract.
    
    Therefore, there is only one direct-envy stable matching $\mu=\{(s_1,c_1,r_0),(s_3,c_2,r_0)\}$, which is not resource-efficient.
\qed\medskip

\noindent\textit{Proof of \Cref{prop:impossibility_envy}}. Consider a market with three students, three colleges, and one (non-empty) resource $r$ with region containing all colleges. Quotas are $q_{c_1} = q_{c_2} = q_{c_3} = q_r = 1$. Preferences are:

    \begin{table}[H]
        \centering
        \begin{tabular}{l c c}
            $\succ_{s_1}: (c_1,r),(c_1,r_0),(c_3,r),(c_2,r),(c_3,r_0),(c_2,r_0),\emptyset$ &  & $\succ_{c_1}: s_2,s_3,s_1$\\
            $\succ_{s_2}: (c_3,r),(c_2,r),(c_1,r),(c_2,r_0),(c_3,r_0),(c_1,r_0),\emptyset$ &  & $\succ_{c_2}: s_2,s_3,s_1$\\
            $\succ_{s_3}: (c_1,r),(c_3,r),(c_1,r_0),(c_2,r),\emptyset$ &  & $\succ_{c_3}: s_3,s_2,s_1$
        \end{tabular}
    \end{table}

    Let us look for all direct-envy-stable matchings. Since there is only one unit of non-empty resource, there can be one of four cases.
    \begin{enumerate}
        \item Suppose no-one gets a non-empty resource $r$. Then, either $(s_3,r_0)$ is admitted to $c_1$, or $s_3$ is unmatched.
        \begin{enumerate}
            \item If $(s_3,r_0)$ is admitted to $c_1$, then $(s_2,r_0)$ is either admitted to $c_2$ or to $c_3$ (to avoid undominated waste). However, if $(s_2,r_0)$ is admitted to $c_i$ for any $i\in\{2,3\}$, then $(s_2,c_i,r)$ is an undominated waste-blocking contract. 
            \item If $s_3$ is unmatched, then $(s_2,r_0)$ is either admitted to $c_1$, or to $c_2$, or to $c_3$ (to avoid undominated waste). 
            \begin{enumerate}
                \item If $(s_2,r_0)$ is admitted to $c_i$ for any $i\in\{2,3\}$, then $(s_2,c_i,r)$ is an undominated waste-blocking contract.
                \item If $(s_2,r_0)$ is admitted to $c_3$, then $(s_1,r_0)$ is either admitted to $c_i$ for some $i\in\{1,2\}$ or unmatched.
                \begin{enumerate}
                    \item If $(s_1,r_0)$ is admitted to $c_1$, then $(s_3,c_1,r_0)$ is a direct-envy-blocking contract.
                    \item If $(s_1,r_0)$ is either admitted to $c_2$ or unmatched, then $(s_3,c_1,r_0)$ is an undominated waste-blocking contract.
                \end{enumerate}
            \end{enumerate}
        \end{enumerate}
        \item Suppose that $s_1$ gets a unit of $r$. Then each of $s_2$ and $s_3$ either has $r_0$ or is unmatched, and $(s_1,r)$ is matched to some $c_i$ with $i\in\{1,2,3\}$. However, if $(s_1,r)$ is admitted to $c_i$ for any $i\in\{1,2,3\}$, then $(s_2,c_i,r)$ is a direct-envy-blocking contract.

        \item Suppose that $s_2$ gets a unit of $r$. Then each of $s_1$ and $s_3$ either has $r_0$ or is unmatched, and $(s_2,r)$ is matched to some $c_i$ with $i\in\{1,2,3\}$.
        \begin{enumerate}
            \item If $(s_2,r)$ is admitted to $c_1$, then $s_3$ is unmatched, then $(s_1,r_0)$ is matched to $c_3$. So, $(s_2,c_2,r)$ is an undominated waste-blocking contract.

            \item If $(s_2,r)$ is admitted to $c_2$, then either $(s_3,r_0)$ is matched to $c_1$ or $s_3$ is unmatched.

            \begin{enumerate}
                \item If $(s_3,r_0)$ is matched to $c_1$, then $(s_1,r_0)$ is matched to $c_3$. So, $\{(s_1,c_3,r_0),(s_2,c_2,r),$ $(s_3,c_1,r_0)\}$ is \textbf{the first direct-envy stable matching}. There is an \textbf{indirect-envy-blocking contract} $(s_2,c_3,r)$.

                \item If $s_3$ is unmatched, then $(s_1,r_0)$ is matched to $c_3$. So, $(s_3,c_1,r_0)$ is an undominated waste-blocking contract.
            \end{enumerate}
        \end{enumerate}

        \item Finally, suppose that $s_3$ gets a unit of $r$. Then each of $s_1$ and $s_2$ either has $r_0$ or is unmatched, and $(s_3,r)$ is matched to some $c_i$ with $i\in\{1,2,3\}$.
        \begin{enumerate}
            \item If $(s_3,r)$ is admitted to $c_i$ for any $i\in\{1,2\}$, then $(s_2,c_i,r)$ is a direct-envy-blocking contract.

            \item If $(s_3,r)$ is admitted to $c_3$, then either $(s_1,r_0)$ is matched to some $c_i$ with $i\in\{1,2\}$, or $s_1$ is unmatched.

            \begin{enumerate}
                \item If $(s_1,r_0)$ is matched to $c_1$, then $(s_2,r_0)$ is matched to $c_2$. So, $\{(s_1,c_1,r_0),(s_2,c_2,r_0),$ $(s_3,c_3,r)\}$ is \textbf{the second direct-envy stable matching}. There is an \textbf{indirect-envy-blocking contract} $(s_3,c_1,r)$.

                \item If $(s_1,r_0)$ is matched to $c_2$, then $(s_2,r_0)$ is matched to $c_1$. So, $(s_2,c_2,r_0)$ is a direct-envy-blocking contract.

                \item If $s_1$ is unmatched, then $(s_2,r_0)$ is matched to $c_2$. So, $(s_1,c_1,r_0)$ is an undominated waste-blocking contract.
            \end{enumerate}
        \end{enumerate}
    \end{enumerate}
    Therefore, there are only two direct-envy stable matchings $\{(s_1,c_3,r_0),(s_2,c_2,r),$ $(s_3,c_1,r_0)\}$ and $\{(s_1,c_1,r_0),(s_2,c_2,r_0),$ $(s_3,c_3,r)\}$, none of which is envy-free.
    
    This concludes the proof.
\qed\medskip

\noindent\textit{Proof of \Cref{prop:optimal_cuts_des}}. By definition, the matching induced by an optimal profile of cutoffs is such that, for each college holds, that none of its cutoffs can be increased by one in order to get a feasible matching (making sure that the empty cutoff is always the highest). Take such matching $\mu$.
    
    If $(s,c,r)\in \mu$, then any $(s',c,r)$, such that $s'\succ_c s$ could have been included into $\mu$. Moreover, if $(s,c,r_0)\in \mu$, then any $(s',c,r)$, such that $r\in\Resources$ and $s'\succ_c s$ could have been included into $\mu$ Therefore, there are no direct-envy-blocking contracts under $\mu$.
    
    Now suppose that there is a seat-blocking contract $(s,c,r)$ under $\mu$. First, we show that $r\neq r_0$. Since $c$ has an open seat, if $r=r_0$, then we would get feasible matching my setting $\Cutoffs[0][c]$ equal to the rank of $s'$ in ranking of $c$, where $s'$ is the highest ranked student with a seat-blocking contract $(s',c,r_0)$ for $c$ with empty resource. Thus, $\mu$ cannot be induced by an optimal profile of cutoffs. Contradiction.
    
    Now we show that $(s,c',r)\in \mu$, where $\{c,c'\}\subseteq C_r$, i.e., student $s$ is admitted to another college from the same region with a unit of $r$. Note that all resource $r$'s units should be distributed in $C_r$. Suppose otherwise, i.e., college $c$ has an empty seat and can distribute a unit of $r$. Since $\mu$ is induced by an optimal profile of cutoffs, there is no blocking contract with college $c$ and an empty recourse $r_0$. Therefore, there should exist a seat-blocking contract $x$ containing $c$ and $r$ (it is either $(s,c,r)$ or some contract with higher rank at $c$), such that, if we set both $\Cutoffs[0][c]$ and $\Cutoffs[r][c]$ to be equal to the rank of student from $x$, we induce a feasible matching different from $\mu$, thus, $\mu$ cannot be induced by an optimal profile of cutoffs. Contradiction. As a result, for $(s,c,r)$ to be seat-blocking, $s$ should be admitted with $r$ to some college in the same region for $r$ as $c$. 
    
    Finally, we show that $x\neq(s,c,r)$, which implies that $(s,c,r)$ is dominated. Suppose that $x=(s,c,r)$, then if we set both $\Cutoffs[0][c]$ and $\Cutoffs[r][c]$ to be equal to the rank of $s$, we induce a feasible matching different from $\mu$, thus, $\mu$ cannot be induced by an optimal profile of cutoffs. Contradiction.

    Now suppose that there is a resource-blocking contract $(s,c,r)$ under $\mu$. Thus, $r\neq r_0$, $(s,c,r')\in \mu$ with $r'\neq r$, and there is a free unit of $r$ that $c$ can distribute. Suppose that $(s,c,r)$ is undominated, i.e., there is no contract $(s',c,r)$, such that, $s'\succ_c s$ and $(c,r)\succ_{s'} \mu_{s'}$. This implies that $\Cutoffs[r][c]$ is equal to the rank of $s$ minus one. Thus, matching $\mu(\CutoffsMatrix+ \mathbf{1}_{r}^c)=(\mu(\CutoffsMatrix)\backslash (s,c,r'))\cup (s,c,r)$ is feasible. Contradiction.
    
    Thus, $\mu$ is direct-envy stable, i.e., any optimal cutoff profile induces a direct-envy stable matching.

    Now we switch to the second statement. Consider a direct-envy stable matching $\mu$. Construct a profile of cutoffs $\CutoffsMatrix$, such that for each pair $c\in \Colleges$ and $r\in \Resources_0$, the cutoff $\Cutoffs[r][c]$ either equals to the rank of $s$ minus one, where $x_r^c\equiv (s,c,r)$ is a contract with the highest ranked student $s$ by $c$, such that $(c,r) \succ_s \mu_s$, or is maximal, if no such contract exists. We need to show that, first, $\CutoffsMatrix$ induces $\mu$, and, second, $\CutoffsMatrix$ is optimal.

    Take any contract $(s,c,r)\in \mu$. By construction, $\Cutoffs[r][c]\leq$ the rank of $s$, since otherwise there will be a direct envy-blocking contract $x_r^c$. So, contract $(s,c,r)$ is included into $\mu(\CutoffsMatrix)$, since it is the best contract for $s$ that she can choose from. Thus, $\CutoffsMatrix$ induces $\mu$.

    Take any not maximal cutoff $\Cutoffs[r][c]$ with a corresponding contract $x_r^c = (s,c,r)$, where the rank of $s$ is equal to $\Cutoffs[r][c]+1$. The constructed cutoff profile is not optimal if a matching $(\mu(\CutoffsMatrix)\backslash \mu(\CutoffsMatrix)_s)\cup x_r^c$ is feasible. It may happen in one of two cases:
    \begin{enumerate}
        \item if all seats of $c$ are taken, but a unit of $r$ for $c$ is free, i.e., some $(s,c,r')\in \mu$, which implies that $(s,c,r)$ is undominated resource-blocking contract. Contradiction.

        \item if there is an empty seat at $c$, i.e., $(s,c,r)$ is undominated seat-blocking contract. Contradiction.
    \end{enumerate}

    Thus, $\CutoffsMatrix$ is the only optimal cutoff profile that induces $\mu$. As a result, any direct-envy stable matching can be induced by unique optimal cutoff profile.
\qed\medskip

\noindent\textit{Proof of \Cref{prop:IRC_IMC_IDC_des}}. By construction, the resulting matching $\mu$ in all three mechanisms is induced by an optimal profile of cutoffs. Thus, by \Cref{prop:optimal_cuts_des}, $\mu$ is direct-envy stable.

Since there are at most $N^2$ cutoffs, each of which can be updated at most $N$ times, the overall computational complexity of both mechanisms is $\mathcal{O}(N^3)$.
\qed\medskip

\noindent\textit{Proof of \Cref{prop:properties_IUC}}. IUC is envy-free by construction. It cannot produce any envy-blocking or resource-blocking contracts, because each college has all its cutoffs being equal.

Since there are at most $N$ cutoffs, each of which can be updated at most $N$ times, the overall computational complexity of IUC is $\mathcal{O}(N^2)$.
\qed\medskip

\noindent\textit{Proof of \Cref{prop:cutoff_mechs_not_strategy-proof}}. 
    Consider a market with two students, two colleges, and one resource. Corresponding region contains both colleges. Quotas are $q_{c_1}=q_{c_2}=q_r=1$. Preferences are:

    \begin{table}[H]
    \centering
    \begin{tabular}{l c c}
    $\succ_{s_1}: (c_2,r),(c_1,r),\emptyset$ &  & $\succ_{c_1}: s_1,s_2$\\
    $\succ_{s_2}: (c_2,r),\emptyset$ &  & $\succ_{c_2}: s_2,s_1$
    \end{tabular}
    \end{table}
    
    Suppose, first we choose $c_1$ for IUC, IDC, or IMC, or $(c_1,r)$ for IRC. Thus, all for mechanisms yield $\{(s_1,c_1,r)\}$. Note that there is only one stable matching $\{(s_2,c_2,r)\}$. Therefore, IUC, IMC, IRC, and IDC do not always find a stable matching whenever one exists.
    
    Now, consider a market with two students, two colleges, and only empty resource. Quotas are $q_{c_1}=q_{c_2}=1$. Preferences are:
    
    \begin{table}[H]
    \centering
    \begin{tabular}{l c c}
    $\succ_{s_1}: (c_2,r_0),(c_1,r_0),\emptyset$ &  & $\succ_{c_1}: s_1,s_2$\\
    $\succ_{s_2}: (c_1,r_0),(c_2,r_0),\emptyset$ &  & $\succ_{c_2}: s_2,s_1$
    \end{tabular}
    \end{table}
    
    Suppose, first we choose $c_1$ for IUC, IDC, or IMC or $(c_1,r_0)$ for IRC; then we choose $c_2$ for IUC, IDC, or IMC or $(c_2,r_0)$ for IRC; then again $c_1$ for IMC or $(c_1,r_0)$ for IRC; then we choose $c_2$ for IUC, IDC, or IMC or $(c_2,r_0)$ for IRC. Thus, all mechanisms yield $\{(s_1,c_1,r_0),(s_2,c_2,r_0)\}$.
    
    Now, suppose that student $s_1$ lies about her preferences: $\succ'_{s_1}:(c_2,r_0),\emptyset$. Thus, now all mechanisms yield $\{(s_1,c_2,r_0),(s_2,c_1,r_0)\}\succ_{s_1}\{(s_1,c_1,r_0),(s_2,c_2,r_0)\}$. Thus, IUC, IMC, IRC, and IDC are not strategy-proof for students.\footnote{For this case, IUC, IMC, IRC, and IDC act as college-proposing deferred acceptance, which is not strategy-proof for students.}
\qed\medskip

\noindent\textit{Proof of \Cref{prop:cutoffs_mechanisms_are_stable_if_no_resources}}. By construction, IMC, IRC, IDC and IUC mechanisms become college-proposing deferred acceptance (CDA) under no non-empty resources. Thus, first, they produce the same matching for the same market, and, second, all properties of CDA transfer to IMC, IRC, IDC and IUC, e.g., stability of the resulting matching (since stability from \Cref{def:stability} becomes classical \citet{gale62} stability under no non-empty resources).
\qed\medskip

\noindent\textit{Proof of \Cref{prop:properties_SD}}. RSD and CSD are Pareto-efficient by construction. In turn, Pareto-efficiency implies non-wastefulness (absence of waste-blocking and resource-blocking contracts). RSD is strategy-proof by construction. CSD is not strategy-proof as students may affect their chances of being chosen earlier by putting contracts with high college rank higher in her own ranking.

Since there are at most $N$ students to call, the overall computational complexity of both mechanisms is $\mathcal{O}(N)$.
\qed\medskip

\noindent\textit{Proof of \Cref{prop:CSD_is_stable_under_colleges_alignment}}. Consider the common preferences of colleges $\succ$. By construction, CSD call order of students is exactly $\succ$, i.e., the best student chooses a contract first, second best chooses a contract second, etc. Thus, no student may affect this order and, moreover, each time any student is called to choose, her best action is to choose the best possible contract according to her true preferences, i.e., CSD is strategy-proof.

By \Cref{prop:properties_SD}, CDS is non-wasteful and is resource-efficient. Moreover, from above we know that call order is exactly $\succ$, therefore, by \Cref{def:envy_freeness}, there may not be envy in the final matching. As a result, by \Cref{def:stability}, CSD is stable.
\qed

\section{Implementation Details}\label{app:implementation_details}

\subsection{Preferences}

Our numerical results consider three kind of alignments on preferences. For each kind of market, we first draw preferences without unacceptable contracts with the following rules, and then randomly erase contracts from each agent's preferences to simulate unacceptable contracts. Due to this, even under the full alignment regime, agents can present different preferences.

\begin{itemize}
    \item[1.] \textbf{No alignment}: College preferences are unconstrained, and student's preferences are such that $(c,r)\succ_s (c,r_0)$ for each student $s$, college $c$, and resource $r$.
    \item[2.] \textbf{Semi-alignment}: Preferences are sampled from a parametric distribution where certain contracts are more likely to be ranked higher. We explain them in detail below.
    \item[3.] \textbf{Full alignment}: Agents from the same side agree on the preferences.
    \begin{itemize}
        \item Student full-alignment: Colleges' preferences are unconstrained, while students' preferences are aligned over colleges and resources separately, i.e., for any student $s\in\Students$:
    \begin{itemize}
        \item for any resource $r \in \Resources_0$, $(c_{|{\Colleges}|},r) \succ_s ... \succ_s (c_2,r) \succ_s (c_1,r)$, and
        \item for any college $c \in \Colleges$, 
        $(c,r_{|{\Resources}|}) \succ_s ... \succ_s (c,r_2) \succ_s (c,r_1) \succ_s (c,r_0)$.
    \end{itemize}     
    \item College full-alignment: All colleges agree on the preferences over students, i.e., for any college $c\in\Colleges$, $s_{|{\Students}|} \succ_c ... \succ_c s_2 \succ_c s_1$, while students have no-alignment preferences as above.
    \item Students and College full-alignment: Both students and colleges have aligned preferences as described above.
    \end{itemize}
\end{itemize}

Regarding semi-aligned preferences, we only implemented it for students, as real-life scenarios rarely show colleges semi-aligned. \Cref{alg:aligned_preferences} details how to sample a semi-aligned preference ordering for a given student.

\begin{algorithm}[ht]
\caption{Aligned preferences for student $s$}
\label{alg:aligned_preferences}
Pref($s$) $\longleftarrow [\ ]$

Contracts $\longleftarrow \{c_{\mathcal{C}},r_{\mathcal{R}}\}$ (the most preferred contract)

\While{Contracts is non-empty}{

Randomly select a contract $(c,r)$ from Contracts

Contracts $\longleftarrow$ Contracts $\setminus\ (c,r)$
        
Pref($s$) $\longleftarrow$  Pref($s$) $\cup\ (c,r)$

\For{each contract in $\mathcal{C}\times\mathcal{R}_0$ immediately lower in rank than $(c,r)$}{
Add the lower-ranked contract to Contracts
}
}
Truncate Pref$(s)$ at a random position
\end{algorithm}

Colleges are endowed with qualities (College $1$ has the lowest quality, College $C$ has the highest quality). Given a student $s$, with a (partially created, initially empty) preference ordering Pref$(s)$, we sample a college $c$ from $\mathcal{C}$ with probability
\begin{align*}
    p_c := \frac{\text{quality}(c)}{\sum_{c'\in\mathcal{C}} \text{quality}(c')},
\end{align*}
that is, colleges with higher quality are more likely to be ranked higher, and randomly choose a non-empty contract $r$ such that $(c,r)$ does not already belong to Pref$(s)$, and add it to Pref$(s)$. If for a college $c$ all contracts $(c,r)$ with non-empty resources $r$ have been already included in Pref$(s)$, then we add $(c,r_0)$.

\subsection{Balanced and Unbalanced Markets}

We study the stability of the mechanisms introduced in \Cref{section:mechanisms} empirically under both balance and unbalanced markets. In balanced markets, the total number of seats and the total number of each kind of resource equal the total number of students. For unbalanced markets, we considered two different ways:
\begin{itemize}
    \item[$\bullet$] \textbf{Colleges Up/Down}: The sum of colleges' capacities is higher/lower than the total number of students,
    \item[$\bullet$] \textbf{Resources Up/Down}: The sum of resources' quotas is higher/lower than the total number of students.
\end{itemize}
For the Up markets we consider numbers equal to twice the number of students, e.g., for a Colleges Up market, the sum of capacities of all colleges sum up 2 times the number of available students on the market. For the Down versions, we consider numbers equal to the half of the available students. By mixing both, we simulate four different kinds of unbalanced markets. 

\section{Numerical Results}\label{app:numerical_results}

In this section we show complementary numerical experiments. The main conclusions are the same as in the main body: whenever colleges present alignment on preferences, the best mechanism is CSD. In the rest of the cases, the best one is IMC.

\begin{table}[H]
\centering
\footnotesize{
\caption{Average numbers of blocking contracts over $100$ balanced markets with $100$ students, $10$ colleges, and $1$ resource $\Resources_0 = \{r_0\}$, i.e., college admissions à la \citet{gale62}.}
\begin{tabular}{l|cccccc}
\toprule
Alignment & Mechanism & Resource & Seat   & Direct-Envy   & Indirect-Envy       & Total\\
\hline
 &IRC & 0.0±0.0     & 0.0±0.0 & 0.0±0.0       & 0.0±0.0 & 0.0±0.0     \\
 &IMC & 0.0±0.0     & 0.0±0.0 & 0.0±0.0       & 0.0±0.0 & 0.0±0.0     \\
No &IDC & 0.0±0.0     & 0.0±0.0 & 0.0±0.0       & 0.0±0.0 & 0.0±0.0   \\
Alignment &IUC & 0.0±0.0     & 0.0±0.0 & 0.0±0.0       & 0.0±0.0 & 0.0±0.0\\
 &RSD & 0.0±0.0     & 0.0±0.0 & 29.12±8.341   & 0.0±0.0 & 29.12±8.341 \\
 &CSD & 0.0±0.0     & 0.0±0.0 & 4.84±3.877    & 0.0±0.0 & 4.84±3.877  \\
\toprule
&IRC & 0.0±0.0     & 0.0±0.0 & 0.0±0.0       & 0.0±0.0 & 0.0±0.0      \\
Students &IMC & 0.0±0.0     & 0.0±0.0 & 0.0±0.0       & 0.0±0.0 & 0.0±0.0      \\
Full &IDC & 0.0±0.0     & 0.0±0.0 & 0.0±0.0       & 0.0±0.0 & 0.0±0.0      \\
Alignment &IUC & 0.0±0.0     & 0.0±0.0 & 0.0±0.0       & 0.0±0.0 & 0.0±0.0      \\
&RSD & 0.0±0.0     & 0.0±0.0 & 151.35±22.17  & 0.0±0.0 & 151.35±22.17 \\
&CSD & 0.0±0.0     & 0.0±0.0 & 4.48±3.999    & 0.0±0.0 & 4.48±3.999   \\
\toprule
  &IRC & 0.0±0.0     & 0.0±0.0 & 0.0±0.0       & 0.0±0.0 & 0.0±0.0     \\
Colleges  &IMC & 0.0±0.0     & 0.0±0.0 & 0.0±0.0       & 0.0±0.0 & 0.0±0.0     \\
Full  &IDC & 0.0±0.0     & 0.0±0.0 & 0.0±0.0       & 0.0±0.0 & 0.0±0.0     \\
Alignment  &IUC & 0.0±0.0     & 0.0±0.0 & 0.0±0.0       & 0.0±0.0 & 0.0±0.0     \\
  &RSD & 0.0±0.0     & 0.0±0.0 & 30.41±9.868   & 0.0±0.0 & 30.41±9.868 \\
  &CSD & 0.0±0.0     & 0.0±0.0 & 0.0±0.0       & 0.0±0.0 & 0.0±0.0     \\
\toprule
&  IRC & 0.0±0.0     & 0.0±0.0 & 0.0±0.0       & 0.0±0.0 & 0.0±0.0 \\
Students&  IMC & 0.0±0.0     & 0.0±0.0 & 0.0±0.0       & 0.0±0.0 & 0.0±0.0       \\
and&  IDC & 0.0±0.0     & 0.0±0.0 & 0.0±0.0       & 0.0±0.0 & 0.0±0.0\\
Colleges&  IUC & 0.0±0.0     & 0.0±0.0 & 0.0±0.0       & 0.0±0.0 & 0.0±0.0       \\
Full&  RSD & 0.0±0.0     & 0.0±0.0 & 154.57±22.078 & 0.0±0.0 & 154.57±22.078 \\
Alignment&  CSD & 0.0±0.0     & 0.0±0.0 & 0.0±0.0       & 0.0±0.0 & 0.0±0.0       \\
\toprule
\end{tabular}
\label{tab:blocking_contracts_no_resources}
}
\end{table}

\subsection{No Resources}

\Cref{tab:blocking_contracts_no_resources} illustrates \Cref{prop:cutoffs_mechanisms_are_stable_if_no_resources} that all cutoffs mechanisms become Deferred-Acceptance when no non-empty resources are available on the market.
\vspace{-0.5cm}

\subsection{Experiments on Balanced Markets}

We present two extra experiments on balanced markets for 2 and 10 resources, illustrated in \Cref{tab:blocking_contracts_two_resources}.

\subsection{Experiments on Unbalanced Markets}\label{Appendix:Unbalanced markets}

To complement the experiments on unbalanced markets in \Cref{sec:empirics}, we show the performance of the mechanisms in unbalanced markets with two resources when colleges present full alignment on preferences.

Unlike the results observed for balanced horizontal markets with two resources, now all mechanisms produce blocking contracts. Waste is clearly affected by colleges capacities, as for colleges down markets the mechanisms manage to occupy most of their seats. The serial dictatorship mechanisms seem to benefit from colleges up markets as they reduce their indirect envy to zero. Indeed, as colleges have larger quotas, the serial dictatorship mechanisms can allocate to the students to their most preferred colleges. IMC remains the most stable mechanism in average, although for Colleges Up market, its advantage decreases considerably with respect to IRC and CSD.

\begin{table}[H]
\centering
\footnotesize{
\caption{Average numbers of blocking contracts over $100$ balanced markets with $100$ students and $10$ colleges.}
\begin{tabular}{l|cccccc}
\toprule
\multicolumn{7}{c}{\textbf{Two resources ($\Resources_0 = \{r_0,r_1\}$)}}\\
\toprule
Alignment & Mechanism & Resource & Seat   & Direct-Envy   & Indirect-Envy       & Total   \\
\hline
  & IRC & 1.34±1.298  & 0.0±0.0 & 0.0±0.0       & 5.66±4.186  & 7.0±4.879   \\
  & IMC & 0.0±0.0     & 0.0±0.0 & 0.0±0.0       & 0.0±0.0     & \textbf{0.0±0.0}     \\
No  & IDC & 3.62±2.253  & 0.0±0.0 & 0.0±0.0       & 16.75±7.676 & 20.37±9.251 \\
Alignment  & IUC & 0.0±0.0     & 0.0±0.0 & 0.0±0.0       & 0.0±0.0     & \textbf{0.0±0.0}     \\
  & RSD & 0.0±0.0     & 0.0±0.0 & 41.1±13.238   & 0.7±1.162   & 41.8±13.533 \\
  & CSD & 0.0±0.0     & 0.0±0.0 & 8.74±6.144    & 0.17±0.549  & 8.91±6.337  \\
\toprule
 &IRC & 2.23±1.923  & 0.0±0.0 & 0.0±0.0       & 4.63±3.322   & 6.86±4.741 \\
Students &IMC & 0.0±0.0     & 0.0±0.0 & 0.0±0.0       & 0.0±0.0      & \textbf{0.0±0.0}    \\
Full &IDC & 10.96±5.658 & 0.0±0.0 & 0.0±0.0       & 21.85±11.535 & 32.81±16.321 \\
Alignment &IUC & 0.0±0.0     & 0.0±0.0 & 0.0±0.0       & 0.0±0.0      & \textbf{0.0±0.0}    \\
 &RSD & 0.0±0.0     & 0.0±0.0 & 280.65±40.097 & 3.24±5.318   & 283.89±41.46 \\
 &CSD & 0.0±0.0     & 0.0±0.0 & 8.09±5.07     & 0.3±0.806    & 8.39±5.291   \\
\toprule
  &IRC & 1.18±1.203  & 0.0±0.0 & 0.0±0.0       & 3.42±2.786  & 4.6±3.701    \\
Colleges  &IMC & 0.0±0.0     & 0.0±0.0 & 0.0±0.0       & 0.0±0.0     & \textbf{0.0±0.0}      \\
Full &IDC & 3.93±2.747  & 0.0±0.0 & 0.0±0.0       & 14.21±8.644 & 18.14±10.797 \\
Alignment &IUC & 0.0±0.0     & 0.0±0.0 & 0.0±0.0       & 0.0±0.0     & \textbf{0.0±0.0}      \\
  &RSD & 0.0±0.0     & 0.0±0.0 & 40.55±13.776  & 0.55±1.117  & 41.1±13.99   \\
  &CSD & 0.0±0.0     & 0.0±0.0 & 0.0±0.0       & 0.0±0.0     & \textbf{0.0±0.0}      \\
\toprule
Students  &IRC & 4.92±2.887  & 0.0±0.0 & 0.0±0.0       & 12.36±6.552  & 17.28±8.594            \\
and &IMC & 0.0±0.0     & 0.0±0.0 & 0.0±0.0       & 0.0±0.0      & \textbf{0.0±0.0}\\
Colleges  &IDC & 13.59±6.322 & 0.0±0.0 & 0.0±0.0       & 36.89±15.891 & 50.48±21.072            \\
Full  &IUC & 0.0±0.0     & 0.0±0.0 & 0.0±0.0       & 0.0±0.0      & \textbf{0.0±0.0}       \\
Alignment  &RSD & 0.0±0.0     & 0.0±0.0 & 272.73±35.013 & 3.39±5.065   & 276.12±36.115           \\
  &CSD & 0.0±0.0     & 0.0±0.0 & 0.0±0.0       & 0.0±0.0      & \textbf{0.0±0.0}   \\
\toprule
\multicolumn{7}{c}{\textbf{Ten resources ($\Resources_0 = \{r_0,r_1,...,r_9\}$)}}\\
\toprule
&IRC & 0.41±0.736  & 0.66±0.951     & 0.0±0.0       & 9.46±6.034  & 10.53±6.579 \\
&IMC & 0.07±0.292  & 0.8±1.241      & 0.0±0.0       & 3.26±2.928  & \textbf{4.13±3.303}     \\
No &IDC & 0.63±0.956  & 0.92±1.309     & 0.0±0.0       & 18.91±9.061 & 20.46±9.497      \\
Alignment&IUC & 0.0±0.0     & 752.08±387.581 & 0.0±0.0       & 0.0±0.0     & 752.08±387.581 \\
&RSD & 0.0±0.0     & 0.0±0.0        & 75.48±24.962  & 16.31±8.423 & 91.79±29.37      \\
&CSD & 0.0±0.0     & 0.0±0.0        & 28.66±13.258  & 6.94±4.308  & 35.6±15.77     \\
\toprule
&IRC & 1.49±1.921  & 2.16±2.962     & 0.0±0.0       & 18.95±6.331   & 22.6±7.588        \\
Students  &IMC & 0.03±0.171  & 2.37±3.186     & 0.0±0.0       & 9.67±3.829    & \textbf{12.07±4.704}    \\
Full  &IDC & 8.66±8.565  & 3.71±5.206     & 0.0±0.0       & 51.06±22.595  & 63.43±29.689     \\
Alignment  &IUC & 0.0±0.0     & 3344.0±488.829 & 0.0±0.0       & 0.0±0.0       & 3344.0±488.829  \\
  &RSD & 0.0±0.0     & 0.0±0.0        & 623.57±69.089 & 261.15±58.571 & 884.72±89.696  \\
  &CSD & 0.0±0.0     & 0.0±0.0        & 28.74±9.948   & 10.56±5.447   & 39.3±13.706      \\
\toprule
&IRC & 1.05±2.559  & 0.11±0.343     & 0.0±0.0       & 7.28±8.223  & 8.44±10.58     \\
Colleges  &IMC & 0.0±0.0     & 0.01±0.099     & 0.0±0.0       & 0.01±0.099  & \textbf{0.02±0.14}          \\
Full  &IDC & 1.26±1.906  & 0.75±1.081     & 0.0±0.0       & 17.3±10.395 & 19.31±11.836       \\
Alignment  &IUC & 0.0±0.0     & 715.97±387.724 & 0.0±0.0       & 0.0±0.0     & 715.97±387.724 \\
  &RSD & 0.0±0.0     & 0.0±0.0        & 68.85±27.591  & 14.37±6.583 & 83.22±30.454   \\
  &CSD & 0.0±0.0     & 0.0±0.0        & 0.0±0.0       & 0.0±0.0     & \textbf{0.0±0.0}      \\
\toprule
Students&IRC & 3.78±3.463  & 0.53±1.261     & 0.0±0.0       & 38.91±13.23   & 43.22±15.606  \\
and & IMC & 0.0±0.0     & 0.0±0.0        & 0.0±0.0       & 0.02±0.14     & \textbf{0.02±0.14}           \\
Colleges  &IDC & 9.97±12.438 & 2.01±3.369     & 0.0±0.0       & 103.62±38.303 & 115.6±48.831   \\
Full  &IUC & 0.0±0.0     & 3382.99±487.98 & 0.0±0.0       & 0.0±0.0       & 3382.99±487.98 \\
Alignment  &RSD & 0.0±0.0     & 0.0±0.0        & 622.06±81.873 & 256.43±48.975 & 878.49±99.351  \\
  &CSD & 0.0±0.0     & 0.0±0.0        & 0.0±0.0       & 0.0±0.0       & \textbf{0.0±0.0   } \\
\toprule
\end{tabular}
\label{tab:blocking_contracts_two_resources}
}
\end{table}

For Student full aligned markets and Student-College full aligned markets, the overall performance of the mechanisms remains quite similar to the ones in \Cref{tab:blocking_contracts_two_resources_unbalanced_horizontal_market}, so we prefer to omit them. For College full aligned markets, however, IMC seems to become fully stable, as illustrated in \Cref{tab:blocking_contracts_two_resources_unbalanced_college_vertical_market}. This was also observed in experiments with five resources (results can be found with the code). \Cref{tab:blocking_contracts_IMC_five_resources_unbalanced_markets} presents some of the few cases where IMC produces blocking contracts. 

However, unlike \citet{ashlagi17} model, it is not clear that unbalanced markets help in achieving more stable matchings under RRC.

\begin{table}[t]
\centering
\footnotesize{
\caption{Average numbers of blocking contracts produced over $100$ unbalanced markets with $100$ students, $10$ colleges, $2$ resources $\Resources_0 = \{r_0,r_1\}$, and full college alignment.}
\begin{tabular}{l|cccccc}
\toprule
Market & Mechanism & Resource & Seat   & Direct-Envy   & Indirect-Envy       & Total \\
\hline
&IRC & 0.01±0.099  & 0.58±1.022     & 0.0±0.0       & 0.04±0.196      & 0.63±1.083  \\
Colleges Up  &IMC & 0.0±0.0     & 0.0±0.0        & 0.0±0.0       & 0.0±0.0         & \textbf{0.0±0.0}   \\
  &IDC & 0.02±0.14   & 3.16±3.972     & 0.0±0.0       & 0.08±0.337      & 3.26±4.034     \\
Resources Up &IUC & 0.0±0.0     & 135.02±170.054 & 0.0±0.0       & 0.0±0.0         & 135.02±170.054        \\
  &RSD & 0.0±0.0     & 0.0±0.0        & 39.0±42.709   & 0.0±0.0         & 39.0±42.709              \\
  &CSD & 0.0±0.0     & 0.0±0.0        & 0.0±0.0       & 0.0±0.0         & \textbf{0.0±0.0}      \\
  \toprule
&IRC & 1.8±1.8     & 0.0±0.0       & 0.0±0.0       & 4.21±3.819      & 6.01±5.354             \\
Colleges Down  &IMC & 0.0±0.0     & 0.0±0.0       & 0.0±0.0       & 0.0±0.0         & \textbf{0.0±0.0}          \\
  &IDC & 4.8±3.644   & 0.0±0.0       & 0.0±0.0       & 14.9±10.396     & 19.7±13.65    \\
Resources Up  &IUC & 0.0±0.0     & 76.55±140.236 & 0.0±0.0       & 0.0±0.0         & 76.55±140.236          \\
  &RSD & 0.0±0.0     & 0.0±0.0       & 396.19±56.79  & 6.52±9.975      & 402.71±57.696           \\
  &CSD & 0.0±0.0     & 0.0±0.0       & 0.0±0.0       & 0.0±0.0         & \textbf{0.0±0.0}              \\
\toprule
&IRC & 0.0±0.0     & 1.05±1.571    & 0.0±0.0       & 0.03±0.171      & 1.08±1.566       \\
Colleges Down & IMC & 0.0±0.0     & 0.0±0.0       & 0.0±0.0       & 0.0±0.0         & \textbf{0.0±0.0}               \\
 & IDC & 0.0±0.0     & 0.0±85.326    & 0.0±0.0       & 0.0±0.0         & 0.0±85.326            \\
Resources Up & IUC & 0.0±0.0     & 350.68±85.326 & 0.0±0.0       & 0.0±0.0         & 350.68±85.326           \\
 & RSD & 0.0±0.0     & 0.0±0.0       & 86.5±28.483   & 0.0±0.0         & 86.5±28.483             \\
 & CSD & 0.0±0.0     & 0.0±0.0       & 0.0±0.0       & 0.0±0.0         & \textbf{0.0±0.0}      \\
\toprule
&IRC & 0.54±1.153  & 0.0±0.0       & 0.0±0.0       & 1.52±1.808      & 2.06±2.742              \\
Colleges Down & IMC & 0.0±0.0     & 0.0±0.0       & 0.0±0.0       & 0.0±0.0         & \textbf{0.0±0.0}          \\
 & IDC & 0.0±0.0     & 0.0±138.91    & 0.0±0.0       & 0.0±0.0         & 0.0±138.91              \\
Resources Down & IUC & 0.0±0.0     & 236.81±138.91 & 0.0±0.0       & 0.0±0.0         & 236.81±138.91           \\
 & RSD & 0.0±0.0     & 0.0±0.0       & 372.39±86.036 & 1.55±5.324      & 373.94±87.026           \\
 & CSD & 0.0±0.0     & 0.0±0.0       & 0.0±0.0       & 0.0±0.0         & \textbf{0.0±0.0}              \\
  \toprule
\end{tabular}
\label{tab:blocking_contracts_two_resources_unbalanced_college_vertical_market}
}
\end{table}

\begin{table}[t]
\centering
\footnotesize{
\caption{Average numbers of blocking contracts produced by the IMC mechanism over $100$ unbalanced markets with $100$ students, $10$ colleges, and $5$ resources.}
\begin{tabular}{cccccc}
\toprule
\multicolumn{6}{c}{\textbf{Increasing Minimal Cutoffs}}\\
\toprule
\multicolumn{6}{c}{\textbf{No Alignment - Five resources ($\Resources_0 = \{r_0,r_1,...,r_4\}$)}}\\
\toprule
Market & Resource & Seat   & Direct-Envy   & Indirect-Envy       & Total    \\
\hline
Colleges Down - Resources Up & 0.0±0.0     & 0.0±0.0        & 0.0±0.0       & 0.36±0.855     & 0.36±0.855     \\
\toprule
\multicolumn{6}{c}{\textbf{College full-aligned Markets - Five resources ($\Resources_0 = \{r_0,r_1,...,r_4\}$)}}\\
\toprule
Market & Resource & Seat   & Direct-Envy   & Indirect-Envy       & Total    \\
\hline
Colleges Up - Resources Up & 0.0±0.0     & 0.01±0.099     & 0.0±0.0        & 0.0±0.0   & 0.01±0.099     \\
\hline
Colleges Up - Resources Down & 0.0±0.0     & 0.06±0.237     & 0.0±0.0       & 0.0±0.0   & 0.06±0.237     \\
\hline
Colleges Down - Resources Down & 0.0±0.0     & 0.0±0.0        & 0.0±0.0        & 0.03±0.171   & 0.03±0.171     \\
\toprule
\multicolumn{6}{c}{\textbf{Student full-aligned Markets - Five resources ($\Resources_0 = \{r_0,r_1,...,r_4\}$)}}\\
\toprule
Market & Resource & Seat   & Direct-Envy   & Indirect-Envy       & Total    \\
\hline
Colleges Down - Resources Up & 0.0±0.0     & 0.0±0.0        & 0.0±0.0        & 0.31±0.643     & 0.31±0.643      \\
\toprule
\end{tabular}
\label{tab:blocking_contracts_IMC_five_resources_unbalanced_markets}
}
\end{table}

\subsection{Experiments on Students partially-aligned markets}

\Cref{tab:rebuttals} summarizes the performances of the six mechanisms in balanced and unbalanced markets with partially-aligned students, no aligned colleges preferences, and five resources. We observe that IMC, once again, achieves the best global performance.

\begin{table}[t]
\centering
\footnotesize{
\caption{Average numbers of blocking contracts produced by five mechanisms over $100$ Student partial-aligned markets with $100$ students, $10$ colleges, and $5$ resources $\mathcal{R}_0 = \{r_0,r_1,r_2,r_3,r_4\}$.}
\begin{tabular}{cccccc}
\toprule
\multicolumn{6}{c}{\textbf{Student partial-alignment Markets}}\\
\toprule
\multicolumn{6}{c}{\textbf{Balanced}} \\
\hline
& Resource   & Waste   & Direct-Envy   & Indirect-Envy       & Total   \\
\hline
IRC & 1.73±2.315  & 1.03±1.758     & 0.0±0.0        & 13.06±10.469 & 15.82±11.646     \\
  IMC & 0.06±0.369  & 0.89±1.28      & 0.0±0.0        & 1.98±2.709   & \textbf{2.93±3.468}  \\
  IDC & 4.8±5.255   & 1.61±2.825     & 0.0±0.0        & 36.06±28.787 & 42.47±32.336  \\
  IUC & 0.0±0.0     & 354.27±321.113 & 0.0±0.0        & 0.0±0.0      & 354.27±321.113 \\
  RSD & 0.0±0.0     & 0.0±0.0        & 213.12±110.211 & 23.66±22.408 & 236.78±100.5   \\
  CSD & 0.0±0.0     & 0.0±0.0        & 47.28±23.946   & 7.15±8.476   & 54.43±26.635   \\
\hline
\multicolumn{6}{c}{\textbf{Unbalanced: Colleges Up - Resources Up}} \\
\hline
IRC & 0.02±0.14   & 6.36±7.674    & 0.0±0.0       & 0.98±1.435 & 7.36±7.829      \\
  IMC & 0.0±0.0     & 5.98±7.511    & 0.0±0.0       & 0.53±1.1   & \textbf{6.51±7.988 }       \\
  IDC & 0.16±0.418  & 7.03±8.295    & 0.0±0.0       & 2.01±2.9   & 9.2±8.246     \\
  IUC & 0.0±0.0     & 386.03±355.59 & 0.0±0.0       & 0.0±0.0    & 386.03±355.59   \\
  RSD & 0.0±0.0     & 0.0±0.0       & 92.32±94.469  & 0.3±0.755  & 92.62±94.27    \\
  CSD & 0.0±0.0     & 0.0±0.0       & 16.34±20.63   & 0.03±0.171 & 16.37±20.624    \\
\hline
\multicolumn{6}{c}{\textbf{Unbalanced: Colleges Down - Resources Up}} \\
\hline
IRC & 3.3±3.094   & 0.06±0.369     & 0.0±0.0        & 7.95±5.601     & 11.31±8.143      \\
  IMC & 0.0±0.0     & 0.1±0.458      & 0.0±0.0        & 0.2±0.663      & \textbf{0.3±0.964} \\
  IDC & 14.89±9.88  & 0.02±0.199     & 0.0±0.0        & 35.22±27.379   & 50.13±34.605  \\
  IUC & 0.0±0.0     & 251.73±309.947 & 0.0±0.0        & 0.0±0.0        & 251.73±309.947\\
  RSD & 0.0±0.0     & 0.0±0.0        & 610.62±112.108 & 268.34±177.467 & 878.96±252.025\\
  CSD & 0.0±0.0     & 0.0±0.0        & 94.07±35.391   & 49.37±36.149   & 143.44±66.788 \\
\hline
\multicolumn{6}{c}{\textbf{Unbalanced: Colleges Up - Resources Down}} \\
\hline
IRC & 0.0±0.0     & 9.21±5.611    & 0.0±0.0       & 1.24±1.594 & 10.45±5.752   \\
  IMC & 0.0±0.0     & 9.16±5.617    & 0.0±0.0       & 1.2±1.428  & \textbf{10.36±5.698 }\\
  IDC & 0.0±0.0     & 10.32±6.005   & 0.0±0.0       & 1.28±1.393 & 11.6±6.065  \\
  IUC & 0.0±0.0     & 731.8±237.902 & 0.0±0.0       & 0.0±0.0    & 731.8±237.902\\
  RSD & 0.0±0.0     & 0.0±0.0       & 191.89±63.694 & 0.0±0.0    & 191.89±63.694  \\
  CSD & 0.0±0.0     & 0.0±0.0       & 18.8±15.544   & 0.0±0.0    & 18.8±15.544   \\
  \hline
\multicolumn{6}{c}{\textbf{Unbalanced: Colleges Down - Resources Down}} \\
\hline
IRC & 0.67±1.619  & 0.18±0.517     & 0.0±0.0        & 2.05±2.414   & 2.9±3.78           \\
  IMC & 0.0±0.0     & 0.23±0.526     & 0.0±0.0        & 0.73±0.798   & \textbf{0.96±0.999}         \\
  IDC & 4.78±6.691  & 0.13±0.365     & 0.0±0.0        & 10.91±14.09  & 15.82±20.092  \\
  IUC & 0.0±0.0     & 701.06±247.746 & 0.0±0.0        & 0.0±0.0      & 701.06±247.746\\
  RSD & 0.0±0.0     & 0.0±0.0        & 539.57±115.852 & 14.52±25.431 & 554.09±128.86 \\
  CSD & 0.0±0.0     & 0.0±0.0        & 51.51±30.78    & 2.27±4.345   & 53.78±33.362  \\
\toprule
\end{tabular}
\label{tab:rebuttals}
}
\end{table}

\section{Motivation}\label{app:motivation}
Since the seminal work of \citet{gale62}, one of the central objectives of market design has been a problem of matching agents to institutions, e.g., students to schools or colleges \citep{abdulkadiroglu03}, doctors to hospitals \citep{roth99}, workers to employers \citep{kelso82}, etc. The basic agents-institutions matching model consists of a two-sided market, where on the one side we have agents each possessing strict preferences over institutions, while on the other side each institution has its quota (amount of seats to distribute among agents) together with its strict preferences over agents. In general, two-sided many-to-one matching models consider various types of real-life scenarios, including but not limited to affirmative action policies \citep{sonmez19}, soft and hard upper and lower quotas \citep{ehlers14}, and regional constraints \citep{kamada12}.

In particular, in many college admissions systems worldwide, universities do not simply allocate academic seats -- they also assign limited resources such as on-campus dormitories. These dormitories are often shared across multiple programs or institutions and subject to capacity or eligibility constraints, making them an essential part of the matching process and a key determinant of students’ choices. This naturally motivates the study of matching mechanisms that jointly allocate programs and shared resources.

A variety of real-world systems illustrate this complexity. In France, the Parcoursup platform coordinates program admissions, while dormitory spaces are limited and shared across programs and institutions.\footnote{Remark in Section 5.1 of the official description of Parcoursup 2025 states that some dormitories are shared across multiple boarding schools, requiring simultaneous coordination of several merit admission orders and dorm rankings. Although the general cutoff-adjustment mechanism is described in the annex, no explicit step-by-step procedure or numerical example is provided for the shared dormitory case. The full official up-to-date description of Parcoursup (in French) can be found at \url{https://services.dgesip.fr/T454/S764/algorithme_national_de_parcoursup}.} In Turkey, students admitted via ÖSYM apply for KYK dormitories that serve multiple universities in the same region and are allocated under a central system with social criteria.\footnote{See \url{https://studyinturkiye.com/student-housing-options-in-istanbul-a-guide-for-international-students/}.} In Russia, dormitories are allocated by universities, typically with eligibility rules and limited capacity; these dorms are frequently shared across faculties and sometimes across universities in multi-campus systems.\footnote{See \url{https://news.itmo.ru/en/features/life_in_russia/news/12721/}.}

In China, the Gaokao-based admission system is linked to dormitory assignments, which are typically program- or campus-based but constrained by capacity, often affecting students’ preferences.\footnote{See \url{https://en.sias.edu.cn/Campus_Life1/Student_Life/Dormitories.htm}.} Similarly, in India, the JoSAA mechanism allocates seats to top engineering colleges, but hostel access -- essential for many -- is institution- and category-dependent, with gender- and caste-based reservations.\footnote{See \url{https://josaa.nic.in}.}

Beyond these examples, several countries feature explicitly intercollegiate or theme-based dormitory systems. For instance, Germany offers shared student housing across universities and programs, sometimes through cooperative or merit-based selection (e.g., Studentendorf in Berlin).\footnote{See \url{https://www.studentendorf.berlin}.} In the UK, the University of London’s intercollegiate halls, such as International Hall, allocate housing across member colleges.\footnote{See \url{https://www.london.ac.uk/about/services/halls/international-hall}.} Belgium also provides a unique structure through kots-à-projet—theme-based shared flats managed jointly by students and the university.\footnote{See \url{http://www.organe.be/les-kots-a-projets/}.}

These systems share a common feature: shared, scarce dormitory resources influence both the allocation of students to programs and the structure of their preferences, especially when housing access is limited or prioritized by policy. The presence of regional, institutional, or category-specific dormitory quotas—and their interaction with academic admissions—calls for theoretical models and mechanism design tools that can account for these complementarities and constraints.

\section{Related Literature}\label{app:literature}

Groundbreaking work by \citet{kamada15} introduced matching with distributional constraints into both computer science and economics. Constraints are called \textit{aggregate} if there is at least one constraint that considers more than one college (e.g., regional constraints of \citet{kamada12} and \citet{kamada18}). RRC are obviously aggregate. The following papers work with two-sided many-to-one matching markets under various types of constraints, none of which fully capture RRC.

\citet{delacretaz23} and \citet{nguyen21} consider a two-sided many-to-one matching market under multidimensional knapsack constraints, which are not aggregate. \citet{hafalir22} study stable assignments under a model without aggregate constraints, but with regions. The authors work with choice functions of a region. \citet{liu23} present Student-Project-Resource problem with aggregate constraints. The authors introduce resources and corresponding regions, however, first, students do not have preferences over resources (a unit of a resource is necessary for a college to admit one student), and, second, it is assumed that any resource if fully indivisible, e.g., dormitory should give all its rooms to only one college. In contrast, our approach allows for student's preferences over resources, and for splitting rooms among several colleges from one region. \citet{kojima20} consider a matching problem under general constraints, where each college is constrained to choose among the predefined nonempty collection of subsets of students. \cite{romm24} show that stable allocations may admit justified envy under such constraints. \citet{kawase19} study algorithms that determine whether a given matching is stable under such constraints. \citet{imamura24b} present algorithms that check whether a given matching is Pareto-efficient under such constraints. \citet{kamada24} introduce the most general version of such constraints, named \textit{general upper bounds}. However, since the set of subsets of students to choose from is fixed for each college, such constraints are not aggregate. \citet{suzuki23} study a problem of redistribution of already admitted students across colleges, where constraints are extensionally represented by a set of school-feasible vectors (each containing a list of capacities for all colleges). Like RRC, this approach introduces dependence of one college possible chosen set from the chosen sets of other college(s), however, it has no resources. 

\citet{ismaili18} construct a model of Student-Project-Resource allocation with indivisible resources. \citet{barrot23} and \citet{yahiro24} introduce stable and strategy-proof mechanisms under union of symmetric M-convex sets constraints. However, these constraints are based on the set of school-feasible vectors, thus, do not capture RRC. \citet{kamada23} study Pareto-efficient and fair matchings under a problem of inter-regional students flow with equal in- and out- flow, but without regional constraints.

\end{document}